\title{A discrepancy lower bound for information complexity
%\\{\large Preliminary version -- not for distribution}
}
\author{ 
Mark Braverman\thanks{Princeton University and the University of Toronto, 
\texttt{mbraverm@cs.princeton.edu}. Partially supported by an NSERC Discovery Grant, an Alfred P. Sloan Fellowship, and an NSF CAREER award.}
\institute{} \and
Omri Weinstein\thanks{Princeton University, 
\texttt{oweinste@cs.princeton.edu}.}
}
\newcommand{\getsr}{\in_{R}}
\newcommand{\ring}[1]{\mathbb{#1}}
\newcommand{\Z}{\ring{Z}}
\newcommand{\CProt}[1]{\mathsf{CC}(#1)}
\newcommand{\mathbbb}[1]{{\bf #1}}
\newcommand{\Div}[2]{\mathbbb{D} \left (  #1 ||#2 \right) }
\newcommand{\eqdef}{\stackrel{def}{=}}
\newcommand{\cZ}{\mathcal{Z}}
\newcommand{\ignore}[1]{{}}
\renewcommand{\P}{{\mathbf P}}
\newcommand{\E}{{\mathbf E}}
\newcommand{\cA}{{\cal A}}
\newcommand{\cU}{{\cal U}}
\newcommand{\cW}{{\cal W}}
\newcommand{\ra}{\rightarrow}
\newcommand{\de}{\delta}
\newcommand{\Om}{\Omega}
\newcommand{\Disc}{Disc_{\mu}(f)}
\newcommand{\DiscR}{Disc_{\mu}(R,f)}
\newcommand{\eps}{\epsilon}
\newcommand{\suc}{\mathcal{S}}
\newcommand{\ICmu}[3]{{\mathsf{IC}_{#2}(#1,#3)}}
\newcommand{\IC}[2]{{\mathsf{IC}\left(#1,#2\right)}}
\newcommand{\ICprot}[2]{{\mathsf{IC}_{#2}(#1)}}
\newcommand{\algwidth}{0.97\textwidth}
\newcommand{\cX}{{\cal X}}
\newcommand{\cY}{{\cal Y}}
\newcommand{\cL}{{\cal L}}
\newcommand{\cN}{{\cal N}}
\newcommand{\cB}{{\cal B}}
\newcommand{\cE}{{\cal E}}
\newcommand{\G}{\mathcal{G}}
\newcommand{\Ex}[2]{\mathop{\mathbbb{E}}\displaylimits_{#1}\left
[ #2 \right ]}
\newcommand{\ve}{\varepsilon}
\newcommand{\al}{\alpha}
\newcommand{\be}{\beta}
\begin{document}

%\begin{titlepage}

\maketitle
%\begin{DOCheader}
\thispagestyle{empty}
\begin{abstract}
This paper provides the first general technique for proving information lower bounds on two-party 
unbounded-rounds communication problems. We show that the discrepancy lower bound, which 
applies to randomized communication complexity, also applies to information complexity. More 
precisely, if the discrepancy of a two-party function $f$ with respect to a distribution $\mu$ is 
$Disc_\mu f$, then any two party randomized protocol computing $f$ must reveal at least 
$\Omega(\log (1/Disc_\mu f))$ bits of information to the participants. As a corollary, we obtain that 
any two-party protocol for computing a random function on $\{0,1\}^n\times\{0,1\}^n$ must reveal 
$\Omega(n)$ bits of information to the participants. 

In addition, we prove that the discrepancy of the Greater-Than function is $\Omega(1/\sqrt{n})$, which provides an alternative proof to the recent proof of Viola \cite{Viola11} of the $\Omega(\log n)$ lower bound 
on the communication complexity  of this well-studied function and, combined with our main result, proves the tight $\Omega(\log n)$ lower bound on its information complexity.

The proof of our main result develops a new simulation procedure that may be of an independent interest. In a very recent breakthrough work of Kerenidis et al. \cite{kerenidis2012lower}, this simulation procedure was a building block towards a proof that almost all known 
lower bound techniques for communication complexity (and not just discrepancy) apply to information complexity.
\end{abstract}
%\end{DOCheader}

% \end{titlepage}

\begin{section}{Introduction}
The main objective of this paper is to expand the available techniques for proving information complexity lower bounds 
for communication problems. Let $f:\cX\times \cY \ra\{0,1\}$ be a function, and $\mu$ be a distribution on $\cX\times\cY$. 
Informally, the information complexity of $f$ is the least amount of {\em information} that Alice and Bob need to exchange 
on average to compute $f(x,y)$ using a randomized communication protocol if initially $x$ is given to Alice, $y$ is given 
to Bob, and $(x,y)\sim \mu$. Note that information here is measured in the Shannon sense, and  the amount of 
information may be much smaller than the number of bits exchanged. Thus the randomized 
communication complexity of $f$ is an upper bound on its information complexity, but may not be a lower bound. 

Within the context of communication complexity, information complexity has first been introduced in the context of direct sum theorems for randomized communication 
complexity \cite{ChakrabartiSWY01,BaryossefJKS04,BarakBCR10}. These techniques are also being used in the 
related direction of direct product theorems \\ \cite{klauck2004quantum,Lee08adirect,jain2010strong,klauck2010strong}. A direct sum theorem in a computational model 
states that the amount of resources needed to perform $k$ independent tasks is roughly $k$ times the amount of resources $c$ needed for 
computing a single task. A direct product theorem, which is a stronger statement, asserts that any attempt to solve $k$ independent tasks using 
$o(kc)$ resources would result in an exponentially small success probability $2^{-\Omega(k)}$.

The direct sum line of work \cite{harsha2007communication,jain2008optimal,BarakBCR10,braverman2011information} 
has eventually led to a tight connection (equality) between amortized communication 
complexity and information complexity. 
Thus proving lower bounds on the communication complexity of $k$ copies of $f$ for a growing $k$ is equivalent to 
proving lower bounds on the information complexity of $f$. 
In particular if $f$ satisfies $IC(f)=\Om(CC(f))$, i.e. that its information 
cost is asymptotically equal to its communication complexity, then a strong direct sum theorem holds for $f$. 
In addition to the intrinsic interest of understanding the amount of information exchange that needs to be 
involved in computing $f$, direct sum theorems motivate 
 the development of techniques for proving lower bounds on the information complexity of functions. 

Another important motivation for seeking lower bounds on the information complexity of functions 
stems from understanding the limits of security in two-party computation. In a celebrated result Ben-Or et al. 
\cite{ben1988completeness} (see also \cite{asharov2011full})
showed how a multi-party computation (with three or more parties) may be carried out in a way that 
reveals no information to the participants except for the computation's output.
The protocol relies heavily on the use of random bits that are shared between some, but not all, parties.
Such a resource can clearly not exist in the two-party setting.  
While it can be shown that perfect information security is unattainable by two-party protocols \cite{chor1989zero,bar1993privacy}, quantitatively
it is not clear just how much information the parties must ``leak" to each other to compute $f$. 
The quantitative answer depends on the model in which the leakage occurs, and whether quantum computation is allowed \cite{Klauck04}.
Information complexity answers this question in the strongest possible sense for classical protocols: the parties are allowed to use
private randomness to help them ``hide" their information, and the information revealed is measured on average. Thus an information 
complexity lower bound of $I$ on a problem implies that the {\em average} (as opposed to worst-case) amount of information revealed to 
the parties is at least $I$. 
 
As mentioned above, the information complexity is always upper bounded by the communication complexity of $f$. The converse is 
not known to be true. Moreover, lower bound techniques for communication complexity do not readily translate into lower bound techniques
for information complexity. The key difference is that a low-information protocol is not limited in the amount of communication
it uses, and thus rectangle-based communication bounds do not readily convert into information lower bounds. No general technique
has been known to yield sharp information complexity lower bounds. A linear lower bound on the communication complexity 
of the disjointness function has been shown in \cite{Raz92}. An  information-theoretic proof of this lower bound \cite{BaryossefJKS04} can be 
adapted  to prove a linear {\em information} lower bound on disjointness \cite{BravermanInteractive11}. 
One general technique for obtaining (weak) information complexity lower bounds was introduced in \cite{BravermanInteractive11}, where it has been 
shown that any function that has $I$ bits of information complexity, has communication complexity bounded by $2^{O(I)}$. 
This immediately implies that the information complexity of a function $f$ is at least the log of its communication 
complexity ($IC(f)\geq \Om(\log(CC(f)))$). In fact, this result easily follows from the stronger result we prove below 
(Theorem \ref{thm_pi'}).

\subsection{Our results} \label{our_res}

In this paper we prove that the discrepancy method -- a general communication complexity lower bound technique -- 
generalizes to information complexity. The discrepancy of $f$ with respect to a distribution $\mu$ on inputs, denoted 
$Disc_\mu(f)$, measures how ``unbalanced" $f$ can get on any rectangle, where the balancedness is measured with 
respect to $\mu$:
\begin{equation}
\label{eq:1}
\Disc \eqdef \max_{\text{rectangles }R}
 \bigg{|} \Pr_\mu[f(x,y) = 0 \wedge (x,y)\in R] - \Pr_\mu[f(x,y) = 1 \wedge (x,y)\in R] \bigg{|}. \nonumber 
\end{equation}
A well-known lower bound (see e.g \cite{Kushilevitz1997}) asserts that the distributional communication complexity of $f$,
denoted $D^\mu_{1/2-\eps}(f)$,
when required to predict $f$ with advantage $\ve$ over a random guess (with respect to $\mu$), 
is bounded from below by $\Om(\log 1/\Disc)$:
$$ D^\mu_{1/2-\eps}(f) \geq \log(2\eps/\Disc).$$
Note that the lower bound holds even if we are merely trying to get an advantage of $\ve=\sqrt{\Disc}$ over random guessing in 
computing $f$. We prove that the information complexity of computing $f$ with probability $9/10$ with respect to $\mu$ is also 
bounded from below by $\Om(\log(1/\Disc))$. 

%\no indent
%{\bf Theorem~\ref{thm_disc}.}
\begin{theorem}\label{thm_disc}
Let $f: \cX\times \cY \rightarrow \{0,1\}$ be a Boolean function and let $\mu$ be any probability distribution on 
$\cX\times \cY$. Then 
$$ \ICmu{f}{\mu}{1/10} \geq \Omega(\log(1/\Disc)).$$
\end{theorem}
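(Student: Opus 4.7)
The plan is to reduce the information lower bound to the classical discrepancy lower bound for distributional communication complexity, via a simulation procedure that converts a low-information protocol into a bounded-communication protocol that still solves $f$ with constant advantage under $\mu$.

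First I would fix a protocol $\pi$ computing $f$ with error $1/10$ under $\mu$ whose information cost equals $I = \ICmu{f}{\mu}{1/10}$. The goal is to build a derived protocol $\pi'$ whose communication cost $C'$ is bounded as a function of $I$ and which still computes $f$ correctly under $\mu$ with probability at least, say, $3/4$, giving a constant advantage $\eps' = \Om(1)$ over random guessing. Given such a $\pi'$, the inequality $D^\mu_{1/2-\eps'}(f)\geq \log(2\eps'/\Disc)$ stated in the introduction immediately yields $C' \geq \log(1/\Disc) - O(1)$, and if we can arrange $C' = O(I)$ then the theorem follows.

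The crux is therefore the simulation. The natural idea is round-by-round simulation using correlated sampling (in the spirit of the BBCR compression and its descendants): since $\pi$ has small information cost, on average the posterior distribution of each message given the speaker's input and the transcript so far is close in KL divergence to its posterior given only the transcript, and this allows the non-speaker to ``guess'' the message using essentially only as many bits as its information content. Summed over rounds, one hopes for $C' = O(I)$. The main obstacle is that $\pi$ may use arbitrarily many rounds, so the simulation must be truncated; showing that truncation costs only a bounded amount of error, and that communication is really proportional to information (rather than, e.g., $2^I$), is the heart of the argument. The earlier $IC \geq \Om(\log CC)$ bound corresponds to a weak simulation with $C' = 2^{O(I)}$, which would only give $I \geq \Om(\log\log(1/\Disc))$; obtaining the tight $\log(1/\Disc)$ bound is precisely what the new simulation procedure flagged in the abstract is meant to deliver, presumably by bounding the ``informative rounds'' of $\pi$ using a Markov-type argument on the accumulated information and by arguing that truncation affects only a small mass of the input distribution $\mu$.

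Given such a simulation, the rest of the proof is short: apply the discrepancy inequality to $\pi'$ with its constant advantage $\eps'$, solve for $C'$, and combine with $C' = O(I)$ to conclude $I \geq \Om(\log(1/\Disc))$. So the main obstacle is not the reduction but the construction and analysis of the simulation in the presence of unbounded rounds, together with the careful tracking of how a $(1-o(1))$-accuracy simulation of $\pi$ translates into a $\ge 3/4$-accuracy protocol for $f$.
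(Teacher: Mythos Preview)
Your high-level reduction to the distributional discrepancy bound is exactly right, but the target you set for the simulation is too strong, and this is where the proposal breaks down. You ask for a protocol $\pi'$ with communication $O(I)$ and success probability at least $3/4$, i.e.\ \emph{constant} advantage. That would be full interactive compression (information cost to communication cost, up to constants), which is an open problem; the round-by-round BBCR-style simulation you sketch inherently carries a dependence on $\CProt{\pi}$ (or the number of rounds), and there is no known truncation or Markov argument that removes it while preserving constant advantage. So the ``heart of the argument'' you flag cannot be carried out as stated.

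The missing idea is that you do not need constant advantage. The paper's Theorem~\ref{thm_pi'} builds $\pi'$ with $\CProt{\pi'}=O(I)$ but success probability only $1/2+2^{-O(I)}$. This suffices: plugging $\eps=2^{-O(I)}$ into $D^\mu_{1/2-\eps}(f)\ge \log(2\eps/\Disc)$ gives $O(I)\ge \log(1/\Disc)-O(I)$, and rearranging still yields $I\ge \Omega(\log(1/\Disc))$. The simulation that achieves this is not round-by-round at all: it is a one-shot rejection-sampling of the \emph{entire} transcript. Using public randomness the players propose candidate transcripts; each accepts according to the factor of the transcript probability she knows (the product structure $\P[\pi_{xy}=\ell]=p_X(\ell)p_Y(\ell)$). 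The acceptance probability is $2^{-O(I)}$ by the divergence bounds $\Div{\pi_{xy}}{\pi_x},\Div{\pi_{xy}}{\pi_y}\le O(I)$ (which hold for typical $(x,y)$ by Markov), and when both accept, the sampled transcript is statistically close to $\pi_{xy}$. Hashing compresses the exchange to $O(I)$ bits. On failure, $\pi'$ outputs a random bit, which is what turns the $2^{-O(I)}$ success probability of the sampler into a $2^{-O(I)}$ \emph{advantage} for $f$.
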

\smallskip

\begin{remark}
The choice of $9/10$ is somewhat arbitrary. For randomized worst-case protocols, we may replace the success 
probability with $1/2 + \delta$ for a constant $\delta$, since repeating the protocol constantly many times ($1/\delta^2$) would yield the 
aforementioned success rate, while the information cost of the repeated protocol differs only by a constant factor from the original 
one. In particular, using prior-free information cost \cite{BravermanInteractive11} this implies $ \IC{f}{1/2-\de} \geq \Omega(\delta^2 \log(1/\Disc)).$
\end{remark}

In particular, Theorem~\ref{thm_disc} implies a linear lower bound on the information complexity of the inner product function 
$IP(x,y)=\sum_{i=1}^n x_i y_i \mod 2$, and on a random boolean function $f_r:\{0,1\}^n \times \{0,1\}^n \ra \{0,1\}$, expanding 
the (limited) list of functions for which nontrivial information-complexity lower bounds are known:

\begin{corollary}
The information complexity $\ICmu{IP}{uniform}{1/10}$ of $IP(x,y)$ is $\Om(n)$. 
The information complexity $\ICmu{f_r}{uniform}{1/10}$ of a random function $f_r$ is $\Om(n)$, except with probability $2^{-\Omega(n)}$. 
\end{corollary}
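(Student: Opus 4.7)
The plan is to derive both estimates directly from Theorem~\ref{thm_disc} by bounding the discrepancy $Disc_{uniform}(\cdot)$ from above by $2^{-\Omega(n)}$ for each of the two functions, so that $\log(1/Disc_{uniform}) = \Omega(n)$.

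For $IP$, I would invoke the classical Lindsey-type bound coming from the spectral norm of the Hadamard matrix. The sign matrix $M$ with entries $M_{x,y} = (-1)^{\langle x, y \rangle \bmod 2}$ is a $2^n \times 2^n$ Hadamard matrix, so $M M^\top = 2^n \cdot I$ and hence $\|M\|_{op} = 2^{n/2}$. For any combinatorial rectangle $R = A \times B$, Cauchy--Schwarz gives
\[
\Bigl| \sum_{(x,y) \in R} (-1)^{IP(x,y)} \Bigr| = \bigl|\mathbf{1}_A^\top M\, \mathbf{1}_B\bigr| \leq \|M\|_{op} \cdot \sqrt{|A| \cdot |B|} \leq 2^{n/2} \cdot 2^n.
\]
Dividing by $2^{2n}$ to turn counts into uniform probabilities yields $Disc_{uniform}(IP) \leq 2^{-n/2}$, and Theorem~\ref{thm_disc} immediately gives $\ICmu{IP}{uniform}{1/10} \geq \Omega(n)$.

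For a uniformly random Boolean function $f_r$, I would use a standard concentration plus union bound. Fix a rectangle $R = A \times B$ of size $N := |A| \cdot |B| \leq 2^{2n}$. Since the values $f_r(x,y)$ are independent fair coins, the signed sum $\sigma_R := \sum_{(x,y) \in R} (-1)^{f_r(x,y)}$ is a sum of $N$ independent Rademacher random variables, so by Hoeffding
\[
\Pr\bigl[ |\sigma_R| \geq \alpha \cdot 2^{2n} \bigr] \;\leq\; 2 \exp\!\bigl(-\alpha^2 \cdot 2^{4n}/(2N)\bigr) \;\leq\; 2 \exp\!\bigl(-\alpha^2 \cdot 2^{2n-1}\bigr).
\]
There are at most $4^{2^n}$ combinatorial rectangles on $\{0,1\}^n \times \{0,1\}^n$. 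Choosing $\alpha = 2^{-n/4}$ and union-bounding gives
\[
\Pr\bigl[ Disc_{uniform}(f_r) \geq 2^{-n/4} \bigr] \;\leq\; 4^{2^n} \cdot 2 \exp\!\bigl(-2^{3n/2 - 1}\bigr),
\]
which is doubly-exponentially small, hence certainly at most $2^{-\Omega(n)}$. Consequently $\log(1/Disc_{uniform}(f_r)) \geq n/4$ except with probability $2^{-\Omega(n)}$, and Theorem~\ref{thm_disc} yields the claim.

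The main ``obstacle'' is really just a choice of constants: the union bound above is so slack that any $\alpha = 2^{-cn}$ with $c < 1$ would work, so the linear information lower bound is robust. Since Theorem~\ref{thm_disc} does all the heavy lifting, no new communication-complexity machinery is needed for either part of the corollary; the work is entirely in showing that the ordinary discrepancy is exponentially small, which is classical in both cases.
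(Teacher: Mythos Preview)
Your proposal is correct and matches the paper's intent: the corollary is stated without proof as an immediate consequence of Theorem~\ref{thm_disc} together with the classical facts that $Disc_{uniform}(IP)\le 2^{-n/2}$ and that a random function has exponentially small discrepancy with overwhelming probability. You have simply spelled out those two well-known discrepancy bounds (Lindsey/Hadamard for $IP$, Hoeffding plus a union bound over the $4^{2^n}$ rectangles for $f_r$) and then invoked Theorem~\ref{thm_disc}, which is exactly the intended route.
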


We study the communication and information complexity of the Greater-Than function ($GT_n$) on numbers of length $n$. 
This is a very well-studied problem \cite{Smirnov88,miltersen1995data,Kushilevitz1997}. Only very recently the tight 
lower bound of $\Omega({\log n})$ in the public-coins probabilistic model was given by Viola \cite{Viola11}.
We show that the discrepancy of the $GT_n$ function is $\Omega(1/\sqrt{n})$:

\begin{lemma}
\label{lem:GT}
There exist a distribution $\mu_n$ on $\mathcal{X}\times\mathcal{Y}$ such that the discrepancy of $GT_n$ with respect to 
$\mu_n$ satisfies $$Disc_{\mu_n}(GT_n)<\frac{20}{\sqrt{n}}.$$ 
\end{lemma}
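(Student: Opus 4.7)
The plan is to take $\mu_n$ to be the ``shared-prefix'' distribution on $\{0,1\}^n\times\{0,1\}^n$: sample $I\in\{1,\dots,n\}$ uniformly, a prefix $p\in\{0,1\}^{I-1}$ uniformly, a bit $b\in\{0,1\}$ uniformly, and independent uniform suffixes $X,Y\in\{0,1\}^{n-I}$, and set $x=pbX$, $y=p\bar b\,Y$. By construction, the first position at which $x$ and $y$ differ is $I$, so $GT_n(x,y)=b$ is a uniformly random bit; the whole point is to show that this uniformity propagates to every rectangle.

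The first main step will be to decompose the discrepancy along the complete binary tree of length-$n$ strings. For each internal node $v$ let $r(v):=|A\cap \mathrm{subtree}(v)|/2^{n-\mathrm{level}(v)}$ be the density of $A$ in the subtree rooted at $v$, and similarly $s(v)$ for $B$; set $\delta_r(v):=r(v0)-r(v1)$ and $\delta_s(v):=s(v0)-s(v1)$. Conditioning on $I=\ell+1$ and on the prefix $p\in\{0,1\}^\ell$, the conditional contribution to $\Pr_\mu[x>y,R]-\Pr_\mu[x<y,R]$ becomes the $2\times 2$ determinant $\tfrac12\bigl(r(v_p1)s(v_p0)-r(v_p0)s(v_p1)\bigr)$; using the identity $r(v0)s(v1)-r(v1)s(v0)=\delta_r(v)s(v)-r(v)\delta_s(v)$ and the probability weights $1/(n\cdot 2^\ell)$, I arrive at
\[
\mathrm{Disc}_{\mu_n}(A\times B)=-\frac{1}{2n}\sum_{\ell=0}^{n-1}2^{-\ell}\sum_{v:\,\mathrm{level}(v)=\ell}\bigl[\delta_r(v)s(v)-r(v)\delta_s(v)\bigr].
\]

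The second step combines two applications of Cauchy--Schwarz with a martingale identity. Along a uniformly random root-to-leaf path, $R_\ell:=r(v_\ell)$ is a $[0,1]$-valued martingale ending at $R_n=\mathbf{1}[\mathrm{leaf}\in A]$; the orthogonality of martingale differences therefore gives the exact identity
\[
\sum_{\ell=0}^{n-1}2^{-\ell}\sum_{v:\,\mathrm{level}(v)=\ell}\delta_r(v)^2=4\,\mathrm{Var}(R_n)\le\frac{4|A|}{2^n},
\]
whereas Jensen's inequality only yields the crude but sufficient $\E[s(v_\ell)^2]\le\E[s(v_n)^2]=|B|/2^n$, i.e.\ $\sum_\ell 2^{-\ell}\sum_v s(v)^2\le n|B|/2^n$. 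Applying Cauchy--Schwarz first inside each level and then across levels, each of the two halves $\sum_\ell 2^{-\ell}|\sum_v\delta_r s|$ and $\sum_\ell 2^{-\ell}|\sum_v r\delta_s|$ is bounded by $2\sqrt{n|A||B|}/2^n$, and substituting gives $|\mathrm{Disc}_{\mu_n}(A\times B)|\le 2\sqrt{|A||B|}/(\sqrt n\cdot 2^n)\le 2/\sqrt n$, well below the claimed $20/\sqrt n$.

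The main obstacle is the joint Cauchy--Schwarz in the second step: a naive per-level bound on $|\sum_v\delta_r(v)s(v)|$ only yields a trivial $O(1)$ estimate on the discrepancy, and the $\sqrt n$ improvement comes only from pooling the $\delta_r^2$ budget across \emph{all} levels via the martingale identity, at the price of paying an extra $\sqrt n$ factor for the crude level-by-level control of $\sum_v s(v)^2$. Matching these two Cauchy--Schwarz summations correctly is what turns the otherwise trivial bound into the desired $1/\sqrt n$ rate.
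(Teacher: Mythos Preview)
Your proof is correct and, in fact, yields the sharper constant $2/\sqrt{n}$ as well as the rectangle-dependent bound $|Disc_{\mu_n}(A\times B)|\le 2\sqrt{|A|\,|B|}/(2^n\sqrt{n})$. The distribution $\mu_n$ you chose is exactly the one used in the paper, but the analysis is genuinely different.

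The paper proceeds by induction on $n$, proving the stronger auxiliary statement $Disc_{\mu_n}(GT_n,S\times T)<20\sqrt{st}/\sqrt{n}$ where $s=|S|/2^n$ and $t=|T|/2^n$. In the inductive step one splits $S$ and $T$ according to the first bit; the ``same first bit'' parts reduce to two instances of $GT_{n-1}$ under $\mu_{n-1}$ with weight $(n-1)/(2n)$, and the ``opposite first bit'' parts contribute the cross term $\tfrac{2}{n}(q-p)st$. Closing the induction then requires the elementary but somewhat delicate inequalities $\sqrt{pq}+\sqrt{(1-p)(1-q)}\le 1-(q-p)^2/4$ and $\sqrt{(n-1)/n}<1-1/(2n)$, together with an AM--GM step, and the constant $20$ is an artifact of making all these losses fit.

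Your argument unfolds the whole recursion at once into the tree-level decomposition and then replaces the inductive bookkeeping by two global tools: the martingale $L^2$ identity $\sum_\ell\E[(R_{\ell+1}-R_\ell)^2]=\mathrm{Var}(R_n)$ for the density process, and a double Cauchy--Schwarz that trades the total increment budget $4|A|/2^n$ against the crude per-level bound $\E[s(v_\ell)^2]\le |B|/2^n$. This is cleaner, loses less in the constants, and makes transparent exactly where the $1/\sqrt{n}$ comes from (the product of an $O(1)$ increment budget with an $O(n)$ level count, inside a square root, divided by $n$). The paper's induction is more elementary in that it avoids any probabilistic language, but at the price of the ad hoc inequalities above; your approach is more conceptual and recovers the same rectangle-dependent strengthening automatically.
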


We defer the proof to the appendix.   Lemma \ref{lem:GT} provides an alternative (arguably simpler) proof  of Viola's  \cite{Viola11} lower bound on 
the {\em communication complexity} of $GT_n$. By Theorem \ref{thm_disc}, Lemma~\ref{lem:GT} immediately implies a lower bound on the {\em information complexity} of $GT_n$:

\begin{corollary}
 $\ICmu{GT_n}{\mu_n}{1/10} = \Omega(\log{n})$
 \end{corollary}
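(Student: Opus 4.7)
The plan is to obtain this corollary as an immediate composition of the two results already in hand: Lemma~\ref{lem:GT} supplies a distribution $\mu_n$ under which $GT_n$ has small discrepancy, and Theorem~\ref{thm_disc} converts any such discrepancy bound into a distributional information complexity lower bound. So the proof is a one-line substitution rather than a new argument.

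More concretely, I would first invoke Lemma~\ref{lem:GT} to fix the distribution $\mu_n$ and record the inequality $Disc_{\mu_n}(GT_n) < 20/\sqrt{n}$, so that $\log(1/Disc_{\mu_n}(GT_n)) > \log(\sqrt{n}/20) = \tfrac{1}{2}\log n - \log 20$. Next, I would apply Theorem~\ref{thm_disc} with $f = GT_n$ and $\mu = \mu_n$ to conclude
\[
\ICmu{GT_n}{\mu_n}{1/10} \;\geq\; \Omega\bigl(\log(1/Disc_{\mu_n}(GT_n))\bigr) \;\geq\; \Omega(\log n),
\]
where in the last step we absorb the additive constant $\log 20$ into the $\Omega(\cdot)$. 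This is exactly the claimed bound.

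There is no substantive obstacle to resolve here; the whole content of the corollary was deposited into Lemma~\ref{lem:GT} (whose proof is deferred to the appendix, and is the genuinely nontrivial ingredient) and Theorem~\ref{thm_disc} (the main result of the paper). The only thing to check is that the $\Omega(\log(1/Disc))$ conclusion of Theorem~\ref{thm_disc} is stated with sufficient generality to be instantiated with an arbitrary distribution $\mu$ and error parameter $1/10$, which it is. It may also be worth remarking that combining this with the trivial information-cost upper bound coming from the standard $O(\log n)$-bit binary-search-style randomized protocol for $GT_n$ shows the bound is tight, so $\ICmu{GT_n}{\mu_n}{1/10} = \Theta(\log n)$.
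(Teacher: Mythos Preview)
Your proposal is correct and matches the paper's own reasoning exactly: the paper states that ``By Theorem~\ref{thm_disc}, Lemma~\ref{lem:GT} immediately implies'' the corollary, and your remark on tightness via the $O(\log n)$ randomized protocol also mirrors the paper's follow-up comment.
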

 
This settles the information complexity of the GT function, since this problem can be solved by a randomized protocol with	
$O(\log{n})$ communication (see \cite{Kushilevitz1997}). This lower bound is particularly interesting since it demonstrates
the first tight information-complexity lower bound for a natural function that is not linear.   %We note that this exhibits the first sublinear information complexity
%bound for a problem. %\mnote{O: which is not ``block" type?}

The key technical idea in the proof of Theorem \ref{thm_disc} is a new 
simulation procedure that allows us to convert any protocol that has information cost $I$ into a (two-round) protocol that has 
communication complexity $O(I)$ and succeeds with probability $>1/2 + 2^{-O(I)}$, yielding a $2^{-O(I)}$ advantage over 
random guessing. Combined with the discrepancy lower bound for communication complexity, this proves 
Theorem~\ref{thm_disc}.

\subsection{Comparison and connections to prior results}

%Talk about Lee et al direct product for discrepancy, and mention why using it together with BR10 to prove our result 
%fails (concentration not strong enough to yield error $\eps^k$)\\
%Mention compression results. 
%This task motivated several \it compression results\rm, in which the communication of a ``correct" protocol for $f$ is 
%reduced essentially without affecting its 

The most relevant prior work is an article by Lee, Shraibman, and \v{S}palek \cite{Lee08adirect}. 
Improving on an earlier work of Shaltiel \cite{shaltiel2003towards}, Lee et al. show a direct product theorem for 
discrepancy, proving that the discrepancy of $f^{\otimes k}$  --- the  $k$-wise XOR of a function $f$ with itself  --- behaves 
as $Disc(f)^{\Omega(k)}$. This implies in particular that the communication complexity of $f^{\otimes k}$ scales
at least as $\Om(k\cdot \log Disc(f))$. Using the fact that the limit as $k\ra\infty$ of the amortized communication complexity of $f$ is equal to the information 
cost of $f$ \cite{BravermanR10}, the result of Lee et al. ``almost" implies the bound of Theorem~\ref{thm_disc}. Unfortunately, 
the amortized communication complexity in the sense of \cite{BravermanR10} is the amortized cost of $k$ copies of $f$, where {\em each} copy is 
allowed to err with some probability (say $1/10$). Generally speaking, this task is much easier than computing the XOR (which requires {\em all} copies to 
be evaluated correctly with high probability).
% This specific problem can be addressed, but the reduction in \cite{BravermanR10} is not strong enough to be able to replace repetition with XOR in this context.
  Thus the lower bound that follows from Lee et al. applies to a more difficult problem, and 
does not imply the information complexity lower bound. %The results are thus incomparable. 

Another generic approach one may try to take is to use compression results such as \cite{BarakBCR10} to lower bound the information cost from communication complexity lower bounds. The logic of such a proof would go as follows: ``Suppose there was a information-complexity-$I$ protocol $\pi$ for $f$, then if one can compress it into a low-communication 
protocol one may get a contradiction to the communication complexity lower bound $f$".  Unfortunately, all known compression results compress $\pi$ into a protocol $\pi'$ 
whose communication complexity depends on $I$ but also on $CC(\pi)$. Even for external information complexity (which is always greater than the internal information complexity, the bound obtained in \cite{BarakBCR10} 
is of the form $I_{ext}(\pi)\cdot {\text polylog} (CC(\pi))$. Thus compression results of this type cannot rule out protocols that have low information complexity 
but a very high (e.g. exponential) communication complexity.

Our result can be viewed as a weak compression result for protocols, where a protocol for computing $f$ that conveys $I$ bits of information is converted into 
a protocol that uses $O(I)$ bits of {\em communication} and giving an advantage of $2^{-O(I)}$ in computing $f$. This strengthens the result in 
\cite{BravermanInteractive11} where a compression to $2^{O(I)}$ bits of communication has been shown. We still do not know whether compression
to a protocol that uses $O(I)$ bits of communication and succeeds with high probability (as opposed to getting a small advantage over random) is possible. 

In a very recent breakthrough work of Kerenidis, Laplante, Lerays, Roland, and Xiao  \cite{kerenidis2012lower}, our main protocol 
played an important role in the proof that almost all known 
lower bound techniques for communication complexity (and not just discrepancy) apply to information complexity. The results 
of \cite{kerenidis2012lower} shed more light on the information complexity of many communication problems, and the question of whether interactive communication can be compressed. 

%Why the following ``false" proof doesn't work:
%Suppose $Disc_{\mu}(f) = \delta$. A direct product theorem for discrepancy \cite{Lee08adirect} then implies that
%$Disc_{\mu}(f^{\otimes k}) = \delta^{\Omega(k)}$.  On the other hand, \cite{BravermanR10} recently showed that
%(State sampling protocol result)...
%This 

%$$ \Omega(k\log(1/\delta)) \leq D_{\ve^k/2}^{\mu^k}(f^{\otimes k}) \leq k\cdot \ICmu{f}{\mu}{\ve} + 
%k\cdot \sqrt{k\cdot \ICmu{f}{\mu}{\ve}} + O(k\log(1/\ve))$$

\end{section}

%%%%%%%%%%%%%%%%%%%%%%%%%%%%%%%%%%%%%%%%%%%%%%%%%%%%%%%
%%%%%%%%%%%%%%%%%%%%%   SECTION     %%%%%%%%%%%%%%%%%%%%%%%%%%%
%%%%%%%%%%%%%%%%%%%%%%%%%%%%%%%%%%%%%%%%%%%%%%%%%%%%%%%

\begin{section}{Preliminaries}
\label{sec:prelim}

In an effort to make this paper as self-contained as possible, we provide some background on information theory 
and communication complexity, which is essential to proving our results. For further details and a more thorough
treatment of these subjects see \cite{BravermanR10} and references therein.
%We note that throughout the paper our protocols will make use of both public and private randomness. 

\paragraph{Notation.} We reserve capital letters for random variables and distributions, calligraphic letters for sets, 
and small letters for elements of sets. Throughout this paper, we often use the notation $|b$ to denote conditioning 
on the event $B=b$. Thus $A|b$ is shorthand for $A|B=b$. \\ 

\noindent We use the standard notion of \emph{statistical}/\emph{total variation} distance between two distributions.

\begin{definition}
Let $D$ and $F$ be two random variables taking values in a set $\mathcal{S}$. Their \emph{statistical distance} is
%    \begin{align*}
$| D-F | \eqdef \max_{\mathcal{T} \subseteq \mathcal{S}}(|\Pr[D \in \mathcal{T}] - \Pr[F \in \mathcal{T}]|) = \frac{1}{2} 
\sum_{s \in \mathcal{S}}|\Pr[D=s]-\Pr[F=s]|$
%    \end{align*}
\end{definition}

\subsection{Information Theory}
\begin{definition}[Entropy]
 The \emph{entropy} of a random variable $X$ is $H(X) \eqdef \sum_x \Pr[X=x] \log(1/\Pr[X=x]).$
The \emph{conditional entropy} $H(X|Y)$ is defined as $\Ex{y \getsr Y}{ H(X|Y=y)}$.
\end{definition}

%\begin{fact} $H(AB) = H(A) + H(B|A)$.
%\end{fact}

\begin{definition}[Mutual Information]
The \emph{mutual information} between two random variables $A,B$, denoted $I(A;B)$ is defined to be the quantity 
$H(A) - H(A|B) = H(B) - H(B|A).$ The \emph{conditional mutual information} $I(A;B |C)$ is $H(A|C) - H(A|BC)$.
\end{definition}

%In analogy with the fact that $H(AB) = H(A) + H(B|A)$,
%\begin{proposition}[Chain Rule] \label{prop:infobreak}
%Let $C_1,C_2,D,B$ be random variables. Then \[I(C_1 C_2; B |D) = I(C_1;B|D) + I(C_2;B|C_1D).\]
%\end{proposition}

% 
% The previous proposition immediately implies the following:
% 
% \begin{proposition} [Super-Additivity of Mutual Information] \label{prop:infoadds}
% Let $C_1,C_2,D,B$ be random variables such that for every fixing of $D$, $C_1$ and $C_2$ are independent. Then \[I(C_1; B |D)
% + I(C_2; B|D) \quad \leq \quad I(C_1 C_2;  B | D).\]
% \end{proposition}

We also use the notion of \emph{divergence} (also known as Kullback-Leibler distance or relative entropy), which is a different way 
to measure the distance between two
distributions:
\begin{definition}[Divergence]
The informational divergence between two distributions is $$\Div{A}{B} \eqdef \sum_x A(x) \log(A(x)/B(x)).$$
\end{definition}

%For example, if $B$ is the uniform distribution on $\bits^n$  then $\Div{A}{B} = n - H(A)$.

\begin{proposition}
Let $A,B,C$ be random variables in the same probability space. For every $a$ in the support of $A$ and $c$ in the 
support of $C$, let $B_a$ denote $B|A=a$ and $B_{ac}$ denote $B|A=a,C=c$. Then
$I(A;B | C) = \E_{a,c\in_R A,C} [\Div{B_{ac}}{B_c}]$.
\end{proposition}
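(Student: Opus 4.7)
The plan is to derive the identity by expanding the definition of conditional mutual information and then recognizing that the resulting double sum factors exactly as a weighted KL divergence. Specifically, I would start from $I(A;B\mid C) = H(B\mid C) - H(B\mid A,C)$ and express both terms as expectations over $(a,c)$ drawn from the joint distribution of $(A,C)$, so that the inner summations over $b$ can be combined into a single sum.

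First I would write out $H(B\mid A,C)$ directly as
$$H(B\mid A,C) = -\E_{a,c \getsr A,C}\sum_b \Pr[B=b\mid A=a,C=c]\,\log \Pr[B=b\mid A=a,C=c].$$
For $H(B\mid C)$, the natural expression is an expectation over $c$ only, but I can rewrite it as an expectation over $(a,c)$ by applying the law of total probability $\Pr[B=b\mid C=c] = \sum_a \Pr[A=a\mid C=c]\Pr[B=b\mid A=a,C=c]$, which lets me replace the inner marginalization so that the same outer expectation over $(a,c)$ appears in both terms:
$$H(B\mid C) = -\E_{a,c \getsr A,C}\sum_b \Pr[B=b\mid A=a,C=c]\,\log \Pr[B=b\mid C=c].$$
Subtracting, the logs combine into a log of a ratio, and the inner sum becomes $\sum_b B_{ac}(b)\log\frac{B_{ac}(b)}{B_c(b)}$, which is precisely $\Div{B_{ac}}{B_c}$.

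The main (and only) technical care required is justifying the rewriting of $H(B\mid C)$ as an expectation over $(a,c)$ rather than over $c$ alone; this is routine but is the one step where one must be careful not to conflate $\Pr[B=b\mid C=c]$ with $\Pr[B=b\mid A=a,C=c]$ inside the logarithm while doing so outside. After that, everything reduces to the standard identity $I(X;Y) = \E_x[\Div{Y\mid x}{Y}]$ applied pointwise to the conditional distribution given $C=c$ and then averaged over $c$, so no deeper obstacle arises.
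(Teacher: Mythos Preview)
Your argument is correct; this is the standard derivation of the identity. The paper itself does not supply a proof of this proposition --- it is listed among the information-theoretic preliminaries as a known fact --- so there is nothing to compare against, and your write-up would serve perfectly well as the omitted verification.
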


%%%%%%%%%%%%%%%%%%%%%%%%%%%%%%%%%%%%%%%%%%%%%%%%%%%%%%%%%%%
%%%%%%%%%%%%%%%%%%%%%%%    Subsection         %%%%%%%%%%%%%%%%%%%%%%%%%%%
%%%%%%%%%%%%%%%%%%%%%%%%%%%%%%%%%%%%%%%%%%%%%%%%%%%%%%%%%%%

\subsection{Communication Complexity} \label{section:communication}

We use the standard definitions of the computational model defined in~\cite{Yao79}. For complete details see section A of the appendix.

Given a communication protocol $\pi$, $\pi(x,y)$ denotes the concatenation of the public randomness with all the messages that are
sent during the execution of $\pi$. We call this the \emph{transcript} of the protocol. When referring to the random variable 
denoting the transcript, rather than a specific transcript, we will use the notation $\Pi(x,y)$ --- or simply $\Pi$ when $x$ and $y$ 
are clear from the context, thus $\pi(x,y)\getsr \Pi(x,y)$. 
When $x$ and $y$ are 
random variables themselves, we will denote the transcript by $\Pi(X,Y)$, or just $\Pi$. 

\begin{definition}[Communication Complexity notation]
For a function $f: \mathcal{X} \times \mathcal{Y} \rightarrow \Z_K$, a distribution $\mu$ supported on $\mathcal{X}
\times \mathcal{Y}$, and a parameter $\eps > 0$,  $D^\mu_{\eps}(f)$ denotes the communication complexity of the cheapest
deterministic protocol computing $f$ on inputs sampled according to $\mu$ with error $\eps$. 
%$R_{\eps}(f)$ denotes the cost of the best randomized public coin protocol for computing $f$ with error at most $\eps$ 
%on \emph{every} input.
\end{definition}

\begin{definition}[Combinatorial Rectangle]
A \it Rectangle \rm in $\mathcal{X} \times \mathcal{Y}$ is a subset $R\subseteq \mathcal{X} \times \mathcal{Y}$ which satisfies
$$ (x_1,y_1) \in R \text{ and } (x_2,y_2) \in R \Longrightarrow  (x_1,y_2) \in R$$

\end{definition}

\ignore{
\begin{definition}[Discrepancy of Boolean functions]
Let $f: \mathcal{X} \times \mathcal{Y} \rightarrow \{0,1\}$ be a Boolean function, $R$ be any rectangle, and $\mu$ be a probability 
distribution on $\mathcal{X} \times \mathcal{Y}$. Denote

$$ \DiscR = \bigg{|} \Pr_\mu[f(x,y) = 0 \wedge (x,y)\in R] - \Pr_\mu[f(x,y) = 1 \wedge (x,y)\in R] \bigg{|} $$

The \it Discrepancy \rm of $f$ with respect to $\mu$ is

$$ \Disc = \max_{R} \DiscR $$

\noindent where the maximum is taken over all rectangles $R$.
\end{definition}
}

%%%%%%%%%%%%%%%%%%%%%%%%%%%%%%%%%%%%%%%%%%%%%%%%%%%%%%%%%%%
%%%%%%%%%%%%%%%%%%%%%%%    Subsection         %%%%%%%%%%%%%%%%%%%%%%%%%%%
%%%%%%%%%%%%%%%%%%%%%%%%%%%%%%%%%%%%%%%%%%%%%%%%%%%%%%%%%%%

\subsection{Information + Communication: The information cost of a protocol}

The following quantity, which is implicit in \cite{BaryossefJKS04} and was explicitly defined in \cite{BarakBCR10},
is the central notion of this paper. 

\begin{definition}\label{def:infoProt}
The \it (internal) information cost \rm of a protocol $\pi$ over inputs drawn from a distribution $\mu$ on $\cX\times\cY$, is given by:
$$\ICprot{\pi}{\mu}:=I(\Pi;X|Y)+I(\Pi;Y|X).$$
\end{definition}

\noindent
Intuitively, Definition~\ref{def:infoProt} captures how much the two parties learn about each other's inputs from the execution 
transcript of the protocol $\pi$. The first term captures
what the second player learns about $X$ from $\Pi$ -- the mutual information between the input $X$ and the transcript 
$\Pi$ given the input $Y$. Similarly, the second term captures
what the first player learns about $Y$ from $\Pi$.

Note that the information of a protocol $\pi$ depends on the prior distribution $\mu$, as the mutual information between 
the transcript $\Pi$ and the inputs depends on the prior distribution 
on the inputs. To give an extreme example, if $\mu$ is a singleton distribution, i.e. one with $\mu(\{(x,y)\}) = 1$ for some 
$(x,y)\in \cX\times \cY$, then $\ICprot{\pi}{\mu}=0$ for all possible 
$\pi$, as no protocol can reveal anything to the players about the inputs that they do not already know {\em a-priori}. 
Similarly, $\ICprot{\pi}{\mu}=0$ if $\cX=\cY$ and $\mu$ is supported on 
the diagonal $\{(x,x):x\in\cX\}$.
\smallskip
As expected, one can show that the communication cost $\CProt{\pi}$ of $\pi$ is an upper bound on its information cost 
over {\em any} distribution
$\mu$:

\begin{lemma} \label{lem:ICCC}
\cite{BravermanR10} For any distribution $\mu$, $\ICprot{\pi}{\mu}\le \CProt{\pi}$. 
\end{lemma}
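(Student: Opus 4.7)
The plan is to decompose the transcript into individual bits and exploit the fact that, in any given round, only one party speaks, so the bit sent in that round is independent of the other party's input given the speaker's input and the history of the conversation. Write $\Pi = \Pi_1 \Pi_2 \cdots \Pi_r$ where $\Pi_i$ denotes the $i$-th bit of the transcript (we may assume bits without loss of generality by splitting multi-bit messages). By the chain rule for mutual information,
\[
I(\Pi; X \mid Y) = \sum_{i} I(\Pi_i; X \mid Y, \Pi_{<i}), \qquad I(\Pi; Y \mid X) = \sum_{i} I(\Pi_i; Y \mid X, \Pi_{<i}).
\]

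Next I would analyze each round according to which player is speaking. Suppose round $i$ is Bob's turn: then $\Pi_i$ is a deterministic function of $Y$, $\Pi_{<i}$, and Bob's private randomness (which is independent of $X$ given $Y, \Pi_{<i}$). Consequently $\Pi_i$ is conditionally independent of $X$ given $(Y, \Pi_{<i})$, so $I(\Pi_i; X \mid Y, \Pi_{<i}) = 0$. Symmetrically, if Alice speaks in round $i$, then $I(\Pi_i; Y \mid X, \Pi_{<i}) = 0$. For the remaining terms I bound mutual information by entropy: $I(\Pi_i; Y \mid X, \Pi_{<i}) \le H(\Pi_i \mid X, \Pi_{<i}) \le H(\Pi_i) \le 1$ when Bob speaks, and analogously $I(\Pi_i; X \mid Y, \Pi_{<i}) \le 1$ when Alice speaks.

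Putting the two sums together, each round contributes at most one nonzero term and that term is at most $1$. Thus
\[
\ICprot{\pi}{\mu} = \sum_i \bigl[I(\Pi_i; X \mid Y, \Pi_{<i}) + I(\Pi_i; Y \mid X, \Pi_{<i})\bigr] \le \sum_i 1 = \CProt{\pi},
\]
which is the desired bound. The one subtlety to watch for — and the only place that requires care — is handling the public randomness $R$ (which is part of $\Pi$ by the convention stated earlier): since $R$ is independent of $(X,Y)$, the corresponding terms $I(R;X\mid Y)$ and $I(R;Y\mid X)$ both vanish, so conditioning the entire subsequent analysis on $R$ leaves the argument unchanged. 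Likewise, if $\CProt{\pi}$ is interpreted as expected communication, the same chain-rule decomposition holds with the summation replaced by an expectation over the (possibly random) number of bits, and the bound survives verbatim.
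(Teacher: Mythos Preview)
Your argument is correct and is exactly the standard proof of this fact. Note, however, that the present paper does not actually prove Lemma~\ref{lem:ICCC}; it merely states it with a citation to \cite{BravermanR10}, so there is no in-paper proof to compare against. Your chain-rule decomposition together with the observation that the bit sent in round $i$ is conditionally independent of the non-speaker's input given the history is precisely the argument used in \cite{BravermanR10}. One small remark: the identity of the speaker at step $i$ is determined by the history $\Pi_{<i}$ rather than globally by $i$, so the cleanest phrasing is that for each fixed value of $\Pi_{<i}$ one of the two terms $I(\Pi_i;X\mid Y,\Pi_{<i}=\pi_{<i})$ and $I(\Pi_i;Y\mid X,\Pi_{<i}=\pi_{<i})$ vanishes and the other is at most $1$; averaging over $\pi_{<i}$ then gives the bound you state. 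Your handling of the public randomness is also correct.
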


\noindent On the other hand, as noted in the introduction, the converse need not hold.  %\mnote{explain that a strict inequality would violate direct sum?} \\
In fact, by \cite{BravermanR10}, getting a strict inequality in Lemma~\ref{lem:ICCC} is equivalent to violating the direct sum conjecture 
for randomized communication complexity.

\indent As one might expect, the information cost of a function $f$ with respect to $\mu$ and error $\rho$ is the least amount of 
information that needs to be revealed by a protocol computing $f$ with error $\le \rho$:
$$
\ICmu{f}{\mu}{\rho}:= \inf_{\pi:~ \P_\mu[\pi(x,y)\neq f(x,y)]\le \rho} \ICprot{\pi}{\mu}. 
$$
The (prior-free) information cost was defined in \cite{BravermanInteractive11} as the minimum amount of information that 
a worst-case error-$\rho$ randomized protocol can reveal against {\em all} possible prior distributions. 
$$
\IC{f}{\rho} := \inf_{\text{$\pi$ is a protocol with $\P[\pi(x,y)\neq f(x,y)]\le\rho$ for all $(x,y)$}} \max_\mu~~ \ICprot{\pi}{\mu}.
$$
This information cost matches the amortized {\em randomized} communication complexity of $f$ \cite{BravermanInteractive11}. It is 
clear that lower bounds on $\ICmu{f}{\mu}{\rho}$ {\em for any distribution $\mu$} also apply to $\IC{f}{\rho} $.

\end{section}

%%%%%%%%%%%%%%%%%%%%%%%%%%%%%%%%%%%%%%%%%%%%%%%%%%%%%%%
%%%%%%%%%%%%%%%%%%%%%   SECTION     %%%%%%%%%%%%%%%%%%%%%%%%%%%
%%%%%%%%%%%%%%%%%%%%%%%%%%%%%%%%%%%%%%%%%%%%%%%%%%%%%%%

\begin{section}{Proof of Theorem \ref{thm_disc}}

To establish the correctness of Theorem \ref{thm_disc}, we prove the following theorem, which is the central result 
of this paper:

\begin{theorem}\label{thm_pi'}

Suppose that $\ICmu{f}{\mu}{1/10} = I_\mu$. Then there exist a protocol $\pi'$ such that 

\begin{itemize}

\item $\CProt{\pi'} = O(I_\mu)$.
\item $\P_{(x,y)\sim\mu}[\pi'(x,y) = f(x,y)] \geq 1/2 + 2^{-O(I_\mu)}$

\end{itemize}

\end{theorem}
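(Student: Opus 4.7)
The plan is to construct $\pi'$ via a one-shot acceptance-sampling scheme driven by the prior $P_\Pi$ of the transcript of $\pi$ under $(X,Y)\sim\mu$. Specifically, $\pi'$ proceeds in two rounds as follows. First, using public randomness, Alice and Bob jointly sample a candidate transcript $T\sim P_\Pi$. Each party then computes a local likelihood ratio: Alice computes $L_A(T)\eqdef P_\Pi(T\,|\,X=x)/P_\Pi(T)$ and Bob computes $L_B(T)\eqdef P_\Pi(T\,|\,Y=y)/P_\Pi(T)$. Using private randomness, Alice ``accepts'' $T$ with probability $\min(L_A(T),M)/M$, and similarly Bob with $L_B$, where $M=2^{CI_\mu}$ for a sufficiently large constant $C$. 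Each party sends a single bit of acceptance; if both accept they output $\pi(T)$ (computable from $T$), and otherwise they output an independent uniformly random bit. The total communication is two rounds of one bit each, well within $O(I_\mu)$.

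The analysis rests on the rectangle property of any protocol, namely that $X$ and $Y$ are conditionally independent given the transcript $\Pi$. A direct application of Bayes gives the identity
\[
L_A(T)\,L_B(T)\;=\;\frac{P_\Pi(T\,|\,X,Y)}{P_\Pi(T)}\cdot\frac{\mu(X,Y)}{\mu_X(X)\,\mu_Y(Y)},
\]
so that, conditioned on both parties accepting and neither cap being triggered, $T$ is distributed exactly according to the joint posterior $P_\Pi(\cdot\,|\,X,Y)$. By the correctness of $\pi$ we then have $\pi(T)=f(X,Y)$ with probability at least $9/10$ on this subevent. Moreover, the total probability that both parties accept (before truncation) is $\Ex{T\sim P_\Pi}{L_A L_B}/M^2 = (\mu(X,Y)/(\mu_X(X)\mu_Y(Y)))/M^2$; upon averaging over $(X,Y)\sim\mu$ this equals $\sum_{x,y}\mu(x,y)^2/(\mu_X(x)\mu_Y(y))/M^2\ge 1/M^2=2^{-O(I_\mu)}$ by Cauchy--Schwarz. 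Hence $\pi'$ succeeds with probability at least $\tfrac12+\Omega(2^{-O(I_\mu)})$, as required.

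The hard part of the argument will be controlling the caps. The likelihood ratios can in principle be as large as $2^{\CProt{\pi}}$, so the truncation at $M$ could bias the sampling. The key estimate is that $\Ex{T\sim P_\Pi}{L_A(T)}=1$ unconditionally, whence by Markov $\Pr_{T\sim P_\Pi}[L_A(T)>M]\le 1/M$; more refined, $\Ex{T\sim P_\Pi(\cdot|X)}{\log L_A(T)}=I(\Pi;X)$ and similarly for $L_B$, so a concentration argument shows that $\log L_A(T)$ along a posterior-sampled $T$ is typically of order $I_\mu$ (plus a contribution bounded by the pointwise mutual information between $X$ and $Y$ under $\mu$, which is precisely the extra factor appearing in the identity above and therefore absorbed in the averaging). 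Taking $C$ a large enough absolute constant ensures the capped events contribute only lower-order terms to both the acceptance probability and to the conditional distribution of $T$, so the main $2^{-O(I_\mu)}$ advantage survives. Executing this tail control carefully, and verifying that it indeed leaves the posterior-conditioning clean enough to inherit the $9/10$ correctness of $\pi$, is the primary technical content of the proof.
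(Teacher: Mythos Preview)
Your scheme has a genuine gap hidden in the factor $\mu(X,Y)/(\mu_X(X)\mu_Y(Y))$ that appears in your own identity for $L_A L_B$. This factor is the exponential of the pointwise mutual information between the inputs, whose typical size is $2^{I(X;Y)}$, and $I(X;Y)$ is \emph{not} controlled by $I_\mu$: it can be $\Theta(n)$ while $I_\mu=O(1)$ (e.g.\ when $\mu$ is highly correlated). Two things then break. First, the quantity governing whether your caps are rarely triggered is $I(\Pi;X)$ (since $\Ex{x}{\Ex{T\sim P_\Pi(\cdot|x)}{\log L_A(T)}}=I(\Pi;X)$), and $I(\Pi;X)$ can exceed $I_\mu$ by as much as $I(X;Y)$; taking $M=2^{CI_\mu}$ does not make $\{L_A>M\}$ negligible in the relevant posterior measure. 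Second, and more fatally, even with \emph{no} caps your acceptance probability conditioned on $(x,y)$ equals $w(x,y)/M^2$ with $w(x,y)=\mu(x,y)/(\mu_X(x)\mu_Y(y))$. Hence the distribution of $(x,y)$ \emph{given acceptance} is $\mu$ reweighted by $w$, not $\mu$ itself. The $9/10$ correctness of $\pi$ is only promised under $\mu$; under the $w$-reweighted distribution the error can be arbitrary, and your advantage $\tfrac{1}{M^2}\,\Ex{\mu}{w(x,y)(c(x,y)-\tfrac12)}$ can be negative (put all the $\mu$-error on inputs with large $w$). Your remark that this factor is ``absorbed in the averaging'' is exactly the step that fails.

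The paper sidesteps this entirely by never going through the global prior $P_\Pi$. It uses the rectangle factorization $\pi_{xy}(\ell)=p_A(\ell)\,p_B(\ell)$ \emph{directly}: the public randomness proposes $(\ell,\alpha,\beta)$; Alice accepts iff $\alpha\le p_A(\ell)$ (and a second test against $q_A$, her estimate of $p_B$), and symmetrically for Bob. The reference distributions are $\pi_x=p_A q_A$ and $\pi_y=p_B q_B$, whose divergences to $\pi_{xy}$ average \emph{exactly} to $I_\mu$; a Markov argument then makes both divergences $\le 20 I_\mu$ with probability $\ge 19/20$ over $(x,y)\sim\mu$. Crucially, on this good event the success probability of the sampling is sandwiched in $[0.9\cdot 2^{-O(I_\mu)},\,2^{-O(I_\mu)}]$ \emph{uniformly in $(x,y)$}, and the sampled transcript is within total variation $2/9$ of $\pi_{xy}$, so the $9/10$ correctness transfers with no reweighting. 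That uniformity---absent in your scheme---is what makes the analysis go through.
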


\hspace{1in}

\noindent We first show how Theorem \ref{thm_disc} follows from Theorem \ref{thm_pi'}: \\

\noindent \bf{Proof of Theorem \ref{thm_disc}.}\rm \hspace{0.05in}  Let $f, \mu$ be as in theorem 
\ref{thm_disc}, and let $\ICmu{f}{\mu}{1/10} = I_\mu$. By theorem \ref{thm_pi'},
there exists a protocol $\pi'$ computing $f$ with error probability $1/2 - 2^{-O(I_\mu)}$ using $O(I_\mu)$ 
bits of communication. Applying the discrepancy lower bound for communication complexity we obtain

\begin{equation}
O(I_\mu) \geq D^\mu_{1/2-2^{-O(I_\mu)}}(f) \geq \log(2\cdot 2^{-O(I_\mu)}/\Disc)
\end{equation}

\noindent which after rearranging gives $I_\mu \geq \Omega(\log(1/\Disc))$, as desired.

\hspace{.5in}

We now turn to prove Theorem \ref{thm_pi'}. The main step is the following sampling lemma.

\begin{lemma}\label{lem_sample}

Let $\mu$ be any distribution over a universe $\cU$ and let $I\ge 0$ be a parameter that is known 
to both $A$ and $B$. Further, let $\nu_A$ and $\nu_B$ be two 
distributions over $\cU$ such that $\Div{\mu}{\nu_A}\le I$ and $\Div{\mu}{\nu_B}\le I$. 
The players are each given a pair of real functions $(p_A,q_A)$, $(p_B,q_B)$,  $p_A,q_A,p_B,q_B:\cU\ra [0,1]$ such 
that for all $x\in \cU$, $\mu(x) = p_A(x)\cdot p_B(x)$, $\nu_A(x)=p_A(x)\cdot q_A(x)$, and $\nu_B(x)=p_B(x)\cdot q_B(x)$.
Then there is a (two round) sampling protocol $\Pi_1=\Pi_1(p_A, p_B, q_A, q_B,I)$ 
which has the following properties:

\begin{enumerate}
\item 
at the end of the protocol, the players either declare that the protocol ``fails", or output $x_A\in\cU$ and 
$x_B\in \cU$, respectively (``success").
\item 
let $\suc$ be the event that the players output ``success". Then $\suc \Rightarrow x_A = x_B$, and \newline 
$0.9 \cdot 2^{-50{(I+1)}} \leq \Pr[\suc] \leq 2^{-50(I+1)}$.
\item 
if $\mu_1$ is the distribution of $x_A$ conditioned on $\suc$, then
$|\mu-\mu_1|<2/9$.
\end{enumerate} 

\noindent Furthermore, $\Pi_1$ can be ``compressed" to a protocol $\Pi_2$ such that $\CProt{\Pi_2} = 211 I + 1$,
whereas
%if $\mu_2$ is the distribution of $\Pi_2$ conditioned on the event that the players output ``success", then 
$|\Pi_1 - \Pi_2| \leq 2^{-59 I}$ (by an abuse of notation, here we identify $\Pi_i$ with the random variable representing 
its output).

\end{lemma}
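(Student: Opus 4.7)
The plan is to construct $\Pi_1$ as a correlated rejection-sampling protocol using shared public randomness, and then obtain $\Pi_2$ by compressing Alice's long message via a one-way compression scheme tailored to Bob's prior $q_B$.

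Construction of $\Pi_1$. Using public randomness the players generate an i.i.d. sequence of triples $\{(x_i,u_i^A,u_i^B)\}_{i=1}^{N}$, where $x_i$ is drawn from a publicly-known distribution on $\cU$ (e.g.\ uniform, assuming $|\cU|$ finite), $u_i^A,u_i^B$ are independent uniforms on $[0,1]$, and $N$ is tuned to $\Theta(|\cU|\cdot 2^{-50(I+1)})$. In the first round Alice computes $i_A := \min\{i \leq N : u_i^A \leq p_A(x_i)\}$, declares failure if no such index exists, and otherwise sends $i_A$ to Bob. In the second round Bob checks whether $u_{i_A}^B \leq p_B(x_{i_A})$; if so, both players output $x_{i_A}$ and declare $\suc$. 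A direct per-trial computation gives $\Pr[u_i^A \le p_A(x_i),\ u_i^B \le p_B(x_i)\mid x_i=x] = p_A(x)p_B(x)=\mu(x)$, so the joint acceptance at Alice's first accepted trial has marginal density exactly $\mu$ in $x$. Hence the conditional distribution of $x_{i_A}$ given $\suc$ is \emph{exactly} $\mu$, giving $|\mu-\mu_1|=0<2/9$. Tuning $N$ controls the total success probability, landing it in $[0.9\cdot 2^{-50(I+1)},\,2^{-50(I+1)}]$.

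Compression to $\Pi_2$. The only long message in $\Pi_1$ is $i_A$, which naively costs $\log N \sim \log|\cU|$ bits. I will apply a one-way compression in the spirit of Harsha--Jain--McAllester--Radhakrishnan / Braverman--Rao: Alice transmits a short randomized hash of $x_{i_A}$, and Bob decodes by enumerating candidates in an order determined by his prior $q_B(x)/\sum_y q_B(y)$. The hypothesis $\Div{\mu}{\nu_B}= \E_\mu[\log(p_A/q_B)]\leq I$ ensures that Bob's prior is KL-close to the true distribution $\mu$ of $x_{i_A}$ conditional on $\suc$, so the compression succeeds using $\leq 211I+1$ bits of communication with TV-decoding error $\leq 2^{-59I}$. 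The symmetric hypothesis $\Div{\mu}{\nu_A}\le I$ plays the analogous role if the argument needs to be symmetrized on Alice's side.

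Main obstacle. The subtle point is that the $I$-bound is a \emph{KL} bound, not a TV bound, so for atypical $x$ the hash-decoding may err with non-negligible probability. The fix is a typical-set restriction: set $T \subseteq \cU$ to be $\{x : |\log(p_A(x)/q_B(x))| = O(I)\}$; by Markov applied to the divergence integrand, $\mu(T^c)$ is small, and one restricts the compression analysis to $T$. The small residual from restricting to $T$ and from approximate decoding within $T$ are absorbed, respectively, into the $2/9$ slack in $|\mu-\mu_1|$ and the $2^{-59I}$ TV gap between $\Pi_1$ and $\Pi_2$. A secondary subtlety is that the naive construction calibrates $\Pr[\suc]$ correctly only when $\sum_x p_A(x)$ is not much larger than $2^{50(I+1)}$; for the pathological regime Alice precomputes a rescaling of her acceptance threshold using her own knowledge of $\sum p_A$, which does not alter the conditional output distribution. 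The specific constants $50$, $211$, $59$ emerge from the concrete interplay between the threshold defining $T$, the number of hash bits needed for unique decoding against Bob's $q_B$-prior, and the allocated error budgets.
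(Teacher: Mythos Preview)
Your construction of $\Pi_1$ has a genuine gap: it never uses $q_A$ (or $q_B$) in the sampling phase, and without them the success probability cannot be pinned down by the hypotheses.

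In your scheme, conditioned on $i_A$ existing, $x_{i_A}$ has law proportional to $p_A$, and Bob then accepts with probability $p_B(x_{i_A})$. A direct computation gives
\[
\Pr[\suc]\;=\;\frac{1-(1-P_A)^N}{\sum_{x\in\cU} p_A(x)}\;\le\;\frac{1}{\sum_{x\in\cU} p_A(x)},\qquad P_A:=\frac{1}{|\cU|}\sum_{x} p_A(x),
\]
and no choice of $N$, nor your proposed rescaling of Alice's threshold (which changes $P_A$ but not the denominator, since Bob's test is untouched), can push $\Pr[\suc]$ above $1/\sum_x p_A(x)$. But nothing in the hypotheses bounds $\sum_x p_A(x)$. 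Take $\cU=\{1,\dots,M\}$, $p_A\equiv 1$, $p_B\equiv 1/M$, $q_A\equiv 1/M$, $q_B\equiv 1$; then $\mu=\nu_A=\nu_B$ is uniform, both divergences vanish (so $I=0$), yet $\sum_x p_A(x)=M$ and your protocol succeeds with probability at most $1/M$, violating the required lower bound $0.9\cdot 2^{-50}$ as soon as $M>2^{50}$. Your observation that the \emph{conditional} output law is exactly $\mu$ is correct, but that is not the hard part of the lemma.

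The paper's $\Pi_1$ avoids this by having Alice filter with \emph{two} thresholds on different scales: $\alpha_i\le p_A(x_i)$ and $\beta_i\le 2^{50(I+1)}q_A(x_i)$, with $\alpha_i,\beta_i$ uniform on $[0,2^{50(I+1)}]$. This makes Alice's candidate $a$ distributed as $\nu_A$ (a genuine probability measure, so no unbounded normalization appears). Bob's test is $\beta_a\le p_B(x_a)$; since $\beta_a$ is already conditioned to lie in $[0,2^{50(I+1)}q_A(x_a)]$, this passes with probability $\min\{p_B(x)/(2^{50(I+1)}q_A(x)),1\}$, so the protocol outputs $x$ with probability at most $\mu(x)\cdot 2^{-50(I+1)}$, with equality on the ``good'' set $\{x:\mu(x)\le 2^{50(I+1)}\nu_A(x)\}$. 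The divergence hypothesis $\Div{\mu}{\nu_A}\le I$ enters precisely here (via Claim~3) to force $\mu(\text{good})\ge 24/25$, which is what yields both the upper and lower bounds on $\Pr[\suc]$ and the $2/9$ TV bound. Your scheme has no analogous step at which the divergence bound does any work for $\Pi_1$.

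The compression to $\Pi_2$ is also different and simpler than what you propose: Bob already possesses a symmetric candidate set $\cB$ (defined via $q_B$ and $p_B$) of expected size $2^{O(I)}$, so $O(I)$ random hash bits of $a$ suffice for Bob to isolate $a$ inside $\cB$ by a plain union bound; no HJMR-style decoding against a prior is needed.
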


\hspace{.5in}

We will use the following technical fact about the information divergence of distributions.

\begin{claim}[3]{[Claim 5.1 in \cite{BravermanInteractive11}]} \label{lem_kl}
\label{cl:div1}
Suppose that $\Div{\mu}{\nu}\le I$. Let $\ve$ be any parameter. Then 
$$
\mu\left\{x:~2^{(I+1)/\ve}\cdot \nu(x) <\mu(x)\right\} < \ve.
$$
\end{claim}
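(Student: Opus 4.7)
The plan is to threshold the log-likelihood ratio and apply the data-processing inequality for KL divergence, absorbing the residual slack with a binary-entropy bound. Concretely, set $T := (I+1)/\ve$, let $S := \{x \in \cU : \mu(x) > 2^T \nu(x)\}$, and put $p := \mu(S)$ and $q := \nu(S)$. The claim is exactly the assertion $p < \ve$.

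Summing the defining inequality of $S$ over $x \in S$ gives $p > 2^T q$, and hence $q < 2^{-T} p \leq 2^{-T}$. Now apply the data-processing inequality for divergence (equivalently, the log-sum inequality applied to $S$ and to $\bar S$ separately) to the two-cell coarsening $\{S,\bar S\}$:
\[ I \;\geq\; \Div{\mu}{\nu} \;\geq\; p\log(p/q) + (1-p)\log\bigl((1-p)/(1-q)\bigr). \]
Rewrite the right-hand side as $-H(p) + p\log(1/q) + (1-p)\log(1/(1-q))$. The last term is non-negative and $H(p) \leq 1$, so the right-hand side is at least $p\log(1/q) - 1$. Plugging in $q < 2^{-T}$ gives $p\log(1/q) > pT$, and combining with the previous display yields $I > pT - 1$. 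Rearranging, $p < (I+1)/T = \ve$, which is exactly the desired bound.

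There is no serious obstacle here; the calculation is routine. The one subtlety worth flagging is that a direct Markov bound on $\log(\mu(x)/\nu(x))$ does not quite suffice, because this log-ratio can be negative and the naive estimate for $\E_\mu[(\log\mu/\nu)^{-}]$ carries a factor $1/\ln 2 > 1$ that throws off the constant. The data-processing route avoids this loss: the ``$+1$'' appearing in the threshold $(I+1)/\ve$ is tuned exactly to absorb the single unit of slack coming from $-H(p)\geq -1$, and nothing more is wasted.
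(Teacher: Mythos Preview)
Your proof is correct. It differs from the paper's argument, though both are short and end at the same inequality $I > pT - 1$.

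The paper does not pass through the data-processing inequality. Instead it works directly with the divergence sum: it separates the terms with $\mu(x)<\nu(x)$ (the ``negative'' terms), rewrites each as $\nu(x)\cdot r\log r$ with $r=\mu(x)/\nu(x)\in(0,1)$, and uses the pointwise bound $r\log_2 r > -1$ to conclude that the total negative contribution exceeds $-\sum_{\mathcal{N}}\nu(x)>-1$. The remaining terms with $x\in S$ are each at least $(I+1)/\ve$, giving $I \ge \mu(S)\cdot (I+1)/\ve - 1$ exactly as in your final step.

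So the two proofs differ in how they absorb the single unit of slack: you get it from $-H(p)\ge -1$ after coarsening to a two-cell partition, while the paper gets it from $r\log_2 r>-1$ summed against $\nu$ on the negative set. Your remark about the naive Markov route is on point --- bounding the negative part via $\mu\log(\nu/\mu)\le (\nu-\mu)/\ln 2$ would indeed leak a factor $1/\ln 2$, and the paper's rewriting $\mu\log(\mu/\nu)=\nu\cdot r\log r$ is precisely the trick that dodges this, playing the same role that data-processing plays for you. Either packaging is fine; yours has the mild advantage of invoking a named inequality, while the paper's is slightly more self-contained.
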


\noindent For completeness, we repeat the proof in the appendix.. 

\begin{proof}[\bf Proof of Lemma \ref{lem_sample} \rm]
Throughout the execution of $\Pi_1$, Alice and Bob interpret their shared random tape as a source of 
points $(x_i,\al_i,\be_i)$ uniformly distributed in $\cU\times [0,2^{50(I+1)}] \times [0,2^{50(I+1)}]$. 
Alice and Bob consider the first $T= |\cU|\cdot 2^{100(I+1)}\cdot60I$ such points. 
Their goal will be to discover the first index $\tau$ such that $\al_\tau\le p_A(x_\tau)$ and $\be_\tau\le p_B(x_\tau)$ 
(where they wish to find it using a minimal amount of communication, even if they are most likely to fail). 
First, we note that the probability that an index $t$ satisfies $\al_t\le p_A(x_t)$ and $\be_t \le p_B(x_t)$ 
is exactly $1/|\cU|2^{50(I+1)}2^{50(I+1)} = 1/|\cU|2^{100(I+1)}$. Hence the probability that $\tau>T$ 
(i.e. that $x_\tau$ is not among the $T$ points considered) is bounded by
\smallskip 
\begin{equation} \label{a_nonempty}
\left(1-1/|\cU|2^{100(I+1)}\right)^{T} < e^{-T/|\cU|2^{100(I+1)}} = e^{-60I} < 2^{-60I}
\end{equation}

Denote by $\cA$ the following set of indices $\cA := \{i\le T:~\al_i\le p_A(x_i) \text{ and }\be_i\le 2^{50(I+1)}\cdot q_A(x_i)\}$, 
the set of potential candidates for $\tau$ from A's viewpoint. Similarly, denote 
$\cB := \{i\le T:~\al_i\le 2^{50(I+1)}\cdot q_B(x_i) \text{ and }\be_i\le  p_B(x_i)\}$.

The protocol $\Pi_1$ is very simple. 
Alice takes her bet on the first element $a \in \cA$ and sends it to Bob. 
Bob outputs $a$ only if (it just so happens that) $\be_\tau\le p_B(a)$. The details are given in Figure~\ref{figure:pi1} in the appendix. \\

We turn to analyze $\Pi_1$. Denote the set of ``Good" elements by $$\G\eqdef \{ x:~2^{50(I+1)}\cdot \nu_A(x) 
\geq \mu(x) \; \text{ and } \; 2^{50(I+1)}\cdot \nu_B(x) \geq \mu(x)\}\}.$$ Then by Claim~\ref{cl:div1}, $\mu(\G) \geq 
48/50 = 24/25$. The following claim asserts that if it succeeds, the output of $\Pi_1$ has the ``correct" distribution
on elements in $\G$. 

\begin{proposition}\label{c1}
Assume $\cA$ is nonempty. Then for any $x_i \in \cU$, the probability that $\Pi_1$ outputs $x_i$ is at most 
$\mu(x_i)\cdot 2^{-50(I+1)}$. If $x_i \in \G$, then this probability is \em{ exactly} $\mu(x_i)\cdot 2^{-50(I+1)}$.
\end{proposition}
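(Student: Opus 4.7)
The plan is to condition throughout on the event $\cE = \{\cA \neq \emptyset\}$ (which is what ``assume $\cA$ is nonempty'' amounts to, since otherwise the protocol never outputs anything). Once we condition on $\cE$, the output is entirely determined by the pair $(x_N, \beta_N)$, where $N := \min \cA$: the protocol outputs $x_N$ exactly when $\beta_N \leq p_B(x_N)$. The crucial structural observation is that the triples $(x_j, \alpha_j, \beta_j)$ are i.i.d., so conditional on $N = i$ the triple $(x_i, \alpha_i, \beta_i)$ is distributed as a single fresh uniform triple conditioned on $i \in \cA$, independently of the value $i$. Thus the probability I need to compute collapses to
\[
\Pr[\Pi_1 \text{ outputs } x^* \mid \cE] \;=\; \Pr\bigl[x_1 = x^*,\ \beta_1 \leq p_B(x^*) \,\bigm|\, 1 \in \cA\bigr].
\]

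Next I would evaluate this probability step by step. A direct computation using $\mu$ being a product $\mu(x) = p_A(x)\cdot p_B(x)$ and the independence of $x_1,\alpha_1,\beta_1$ yields $\Pr[1 \in \cA] = 1/(|\cU|\cdot 2^{50(I+1)})$ and $\Pr[x_1 = x^* \mid 1 \in \cA] = \nu_A(x^*)$. Given further that $x_1 = x^*$, the value $\beta_1$ is uniform on $[0, 2^{50(I+1)} q_A(x^*)]$, so
\[
\Pr\bigl[\beta_1 \leq p_B(x^*) \,\bigm|\, 1 \in \cA,\ x_1 = x^*\bigr] \;=\; \min\!\left(1,\ \frac{p_B(x^*)}{2^{50(I+1)} q_A(x^*)}\right).
\]

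The final step is a clean two-case analysis. If $p_B(x^*) \leq 2^{50(I+1)}\, q_A(x^*)$, the min equals $p_B(x^*)/(2^{50(I+1)} q_A(x^*))$, and multiplying by $\nu_A(x^*) = p_A(x^*) q_A(x^*)$ gives \emph{exactly} $\mu(x^*) \cdot 2^{-50(I+1)}$. Otherwise the min equals $1$ and the probability is $\nu_A(x^*)$; but the defining inequality $p_B(x^*) > 2^{50(I+1)} q_A(x^*)$ can be multiplied through by $p_A(x^*)$ to read $\mu(x^*) > 2^{50(I+1)} \nu_A(x^*)$, so in this case $\nu_A(x^*) < \mu(x^*)\cdot 2^{-50(I+1)}$. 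Either way the bound $\leq \mu(x^*)\cdot 2^{-50(I+1)}$ holds, which establishes the ``at most'' part. For the ``exactly'' part, the definition of $\G$ gives $2^{50(I+1)} \nu_A(x^*) \geq \mu(x^*)$, i.e.\ $2^{50(I+1)} q_A(x^*) \geq p_B(x^*)$ (both sides divided by $p_A(x^*)$), putting us in the first case and yielding equality.

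The only potential obstacle is the degenerate case $p_A(x^*)=0$ or $q_A(x^*)=0$, where the algebraic rearrangement $p_B(x^*) \leq 2^{50(I+1)} q_A(x^*) \iff \mu(x^*) \leq 2^{50(I+1)} \nu_A(x^*)$ needs care; this is handled by noting that in such cases $\nu_A(x^*)=0$, so both sides of the claimed equality are zero and the statement is trivially true. Everything else is bookkeeping.
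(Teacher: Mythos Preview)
Your proposal is correct and follows essentially the same approach as the paper's proof: both identify that (conditional on $\cA\neq\emptyset$) the value $a=x_N$ is distributed according to $\nu_A$, and that $\Pr[\beta_N\le p_B(a)\mid a=x^*]=\min\{p_B(x^*)/(2^{50(I+1)}q_A(x^*)),1\}$, then multiply. Your write-up is more careful than the paper's---you justify the first of these facts via the i.i.d.\ structure of the triples (the paper simply asserts it), you split the $\min$ into two explicit cases, and you address the degenerate situations $p_A(x^*)=0$ or $q_A(x^*)=0$ that the paper ignores.
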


\begin{proof}
Note that if $\cA$ is nonempty, then for any $x_i\in \cU$, the probability that $x_i$ is the first element in 
$\cA$ (i.e, $a = x_i$) is $p_A(x_i)q_A(x_i) = \nu_A(x_i)$. By construction, the probability that $\beta_i\leq p_B(a)$ is 
$\min\{p_B(x_i)/(2^{50{(I+1)}}q_A(x_i)), 1\}$, and thus
$$ \Pr[\Pi_1 \text{ outputs } x_i] \leq
p_A(x_i)q_A(x_i)\cdot\frac{p_B(x_i)}{2^{50(I+1)}q_A(x_i)} = \mu(x_i)\cdot 2^{-50(I+1)}.$$
On the other hand, if $x_i \in \G$, then  we know that 
$
p_B(x_i)/q_A(x_i) = \mu(x_i)/\nu_A(x_i) \le 2^{50(I+1)},
$
and so the probability that $\beta_i\leq p_B(a)$ is \it exactly \rm $p_B(x_i)/(2^{50{(I+1)}}q_A(x_i))$. 
Since $\Pr[\Pi_1 \text{ outputs } x_i]  = \Pr[a = x_i]\Pr[\beta_i\leq p_B(a)]$ (assuming $\cA$ is nonempty),
we conclude that:
$$ x_i \in \G \Longrightarrow \Pr[\Pi_1 \text{ outputs } x_i] = 
p_A(x_i)q_A(x_i)\cdot\frac{p_B(x_i)}{2^{50(I+1)}q_A(x_i)} = \mu(x_i)\cdot 2^{-50(I+1)}.$$ 
\end{proof}

We are now ready to estimate the success probability of the protocol. 

\begin{proposition}\label{s_bound}
Let $\suc$ denote the event that $\cA\neq 0$ and  $a \in \cB$ (i.e, that the protocol succeeds). Then
$$0.9 \cdot 2^{-50{(I+1)}} \leq \Pr[\suc] \leq 2^{-50(I+1)}.$$
\end{proposition}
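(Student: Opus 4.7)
The plan is to compute $\Pr[\suc]$ by summing the per-output probabilities supplied by Proposition \ref{c1}, together with a short estimate showing that $\cA$ is nonempty except with tiny probability. Since the protocol can only produce an output when $\cA \neq \emptyset$, I start from the identity
$$
\Pr[\suc] \;=\; \Pr[\cA \neq \emptyset] \cdot \sum_{x \in \cU} \Pr\!\bigl[\,\Pi_1 \text{ outputs } x \,\bigm|\, \cA \neq \emptyset\bigr].
$$
The \emph{upper} bound falls out immediately: Proposition \ref{c1} gives $\Pr[\Pi_1 \text{ outputs } x \mid \cA \neq \emptyset] \leq \mu(x) \cdot 2^{-50(I+1)}$ for every $x$; summing over $\cU$ and using $\sum_x \mu(x) = 1$ gives $\Pr[\suc] \leq 2^{-50(I+1)}$, regardless of the value of $\Pr[\cA \neq \emptyset]$.

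For the \emph{lower} bound I restrict the same sum to $x \in \G$, where Proposition \ref{c1} gives \emph{equality} rather than an inequality. Since $\mu(\G) \geq 24/25$ (this was established just before Proposition \ref{c1} via Claim \ref{lem_kl} applied separately to $\nu_A$ and $\nu_B$), this produces
$$
\Pr[\suc] \;\geq\; \Pr[\cA \neq \emptyset] \cdot \tfrac{24}{25} \cdot 2^{-50(I+1)}.
$$
Because $\tfrac{24}{25} \cdot \tfrac{15}{16} = 0.9$, it only remains to verify $\Pr[\cA \neq \emptyset] \geq 15/16$.

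That last step is a routine independence calculation and I do not anticipate any genuine obstacle. Conditional on $x_i$ (uniform on $\cU$), the marginal probability that $\alpha_i \leq p_A(x_i)$ and $\beta_i \leq 2^{50(I+1)} q_A(x_i)$ equals $\nu_A(x_i)/2^{50(I+1)}$; averaging over $x_i$ and using $\sum_x \nu_A(x) = 1$ yields $\Pr[i \in \cA] = 1/(|\cU| \cdot 2^{50(I+1)})$. Independence of the $T = |\cU| \cdot 2^{100(I+1)} \cdot 60I$ samples then gives $\Pr[\cA = \emptyset] \leq \exp(-2^{50(I+1)} \cdot 60I)$, which is vastly smaller than $1/16$ for any non-trivial $I$. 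The only care required is arithmetic bookkeeping of the constants and, if desired, handling the tiny-$I$ regime (e.g.\ by assuming $I \geq 1$), where the inequality is anyway vacuous or trivial.
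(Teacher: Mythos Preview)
Your proposal is correct and follows essentially the same route as the paper: both arguments write $\Pr[\suc]=\Pr[\cA\neq\emptyset]\cdot\sum_x\Pr[\Pi_1\text{ outputs }x\mid\cA\neq\emptyset]$, invoke Proposition~\ref{c1} termwise (the inequality for all $x$ to get the upper bound, and the equality on $\G$ together with $\mu(\G)\ge 24/25$ for the lower bound), and then absorb the factor $\Pr[\cA\neq\emptyset]$. The only cosmetic difference is in that last factor: the paper simply uses $\Pr[\cA\neq\emptyset]\ge 1-2^{-60I}$ (the same numerical slack appearing in \eqref{a_nonempty}), whereas you recompute $\Pr[i\in\cA]=1/(|\cU|\,2^{50(I+1)})$ directly and obtain the much sharper $\Pr[\cA=\emptyset]\le\exp(-2^{50(I+1)}\cdot 60I)$; both suffice for $\tfrac{24}{25}\cdot\Pr[\cA\neq\emptyset]\ge 0.9$.
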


\begin{proof}

Using Proposition~\ref{c1}, we have that

\begin{eqnarray}
\Pr[\suc] &\leq& \P[a\in \cB \; |\; \cA \neq \emptyset]  = 
\sum_{i\in\cU}\Pr[a = x_i]\Pr[\beta_i\leq p_B(a)] \leq
\\ \nonumber &\leq&  \sum_{i\in\cU} \mu(x_i)\cdot 2^{-50(I+1)} = 2^{-50(I+1)} \label{eq4}
\end{eqnarray}

\noindent For the lower bound, we have

\begin{eqnarray}
%\Pr[\suc] &\geq& \Pr[\exists \;\; b\in \cB \text{  s.t  } h_j(b)=h_j(a) \text{  for  } j=1..d \; |\; \cA \neq 
%\emptyset]\cdot\Pr[\cA \neq \emptyset]  \nonumber \geq \\
\Pr[\suc] &\geq& \Pr[\beta_i\leq p_B(a)  \; |\; \cA \neq \emptyset]\cdot\Pr[\cA \neq \emptyset]  \nonumber \geq \\
&\geq& (1-2^{-60I})\bigg{(}\sum_{i\in\cU}\Pr[a = x_i]\Pr[\beta_i\leq p_B(a)]\bigg{)} \geq \nonumber \\
&\geq& (1-2^{-60I})\bigg{(}\sum_{i\in\G}\Pr[a = x_i]\Pr[\beta_i\leq p_B(a)]\bigg{)} = \nonumber \\
%&=&(1-2^{-60I})\bigg{(}\sum_{i\in\G} p_A(x_i)q_A(x_i)\cdot\frac{p_B(x_i)}{2^{50{(I+1)}}q_A(x_i)}\bigg{)} = \nonumber \\
 &=& (1-2^{-60I})\bigg{(} 2^{-50(I+1)}\sum_{i\in\G} \mu(x_i) \bigg{)} = 
 (1-2^{-60I})\bigg{(} 2^{-50(I+1)}\mu(\G) \bigg{)} \geq \nonumber  \\
 &\geq&  \frac{24}{25}(1-2^{-60I})2^{-50{(I+1)}} \geq 0.9 \cdot 2^{-50{(I+1)}} \label{eq5}
\end{eqnarray}

\noindent where the equality follows again from claim \ref{c1}. This proves the second claim of Lemma~\ref{lem_sample}.
\end{proof}

The following claim asserts that if $\suc$ occurs, then the distribution of $a$ is indeed close to $\mu$. \\

\noindent \bf Claim 4. \rm
Let $\mu_1$ be the distribution of $a|\suc$. Then $|\mu_1 - \mu| \leq 2/9$.
%\end{proposition}

\noindent \it Proof. \rm The claim follows directly from proposition \ref{s_bound}. We defer the proof to the appendix. \\

%%%%%%%%%%%%%%%%%
We turn to the ``Furthermore" part of of Lemma~\ref{lem_sample}. The protocol $\Pi_1$ satisfies the premises of the lemma, except  
it has a high communication cost. This is due to the fact that Alice explicitly sends $a$ to Bob. To reduce the communication,
Alice will instead send $O(I)$ random hash values of $a$, and Bob will add corresponding consistency constraints 
to his set of candidates. The final protocol $\Pi_2$ is given in Figure~\ref{figure:pi2}.

%%%%%%%%%%%%%%%%%%%%%%%
%%%%%%%%%%%%%%%%%%%%%%%

\begin{figure}[h!tb]
\begin{tabular}{|l|}
\hline
\begin{minipage}{\algwidth}
\vspace{1ex}
\begin{center}
\textbf{Information-cost sampling protocol $\Pi_2$}
\end{center}
\vspace{0.5ex}
\end{minipage}\\
\hline
\begin{minipage}{\algwidth}
\vspace{1ex}
\begin{enumerate}
    \item Alice computes the set $\cA$.  Bob  computes the set $\cB$. % If $|\cB| >T$ the protocol fails. 
    \item If $\cA=\emptyset$, the protocol fails. Otherwise, Alice finds the first element $a \in \cA$ and sets $x_A = a$. She then computes $d=\lceil 211 I \rceil$ 
    random hash values $h_1(a),\ldots,h_d(a)$, where the hash functions are evaluated using public randomness. 
    \item Alice sends the values $\{h_j(a)\}_{1\le j\le d}$ to Bob. 
    \item \label{find2} Bob finds the first index $\tau$ such that there is a $b\in \cB$ for which $h_j(b)=h_j(a)$ for $j=1..d$ (if such 
    an $\tau$ exists). Bob outputs $x_B = x_\tau$. If there is no such index, the protocol fails.
    \item Bob outputs $x_B$ (``success"). 
    \item Alice outputs $x_A$. 
\end{enumerate}
\vspace{0.3ex}
\end{minipage}\\
\hline
\end{tabular}
\caption{The sampling protocol $\Pi_2$ from Lemma~\ref{lem_sample} }\label{figure:pi2}
\end{figure}

%%%%%%%%%%%%%%%%%%%%%%%
%%%%%%%%%%%%%%%%%%%%%%%

Let $\cE$ denote the event that in step \ref{find} of the protocol, Bob finds an element $x_i\neq a$ 
(that is, the probability that the protocol outputs ``success" but $x_A \neq x_B$). We upper bound the 
probability of $\cE$. 
Given $a\in \cA$ and $x_i\in \cB$ such that $a\neq x_i$, the probability (over possible choices of the hash 
functions) that $h_j(a)=h_j(x_i)$ for $j=1..d$ is $2^{-d} \leq 2^{-211 I}$. 
For any $t$, $\P[t\in \cB] \leq \frac{1}{|\cU|}\sum_{x_i\in \cU}p_B(x_i)q_B(x_i)\cdot2^{50(I+1)} =
\frac{1}{|\cU|}\sum_{x_i\in \cU}\nu_B(x_i)\cdot2^{50(I+1)} =
2^{50(I+1)}/|\cU|$. Thus, by a union bound we have

\begin{eqnarray}
\P[\cE] &\leq& \P[\exists x_i\in \cB \; s.t \; x_i\neq a \; \wedge \; h_j(a)=h_j(x_i) \; \forall \; j=1,\ldots,d] \leq \nonumber \\
&\leq& T\cdot 2^{50(I+1)}\cdot 2^{-d}/|\cU| = 2^{150(I+1)}\cdot60 I \cdot 2^{-211 I} \ll 2^{-60 I}.
\end{eqnarray}

By a slight abuse of notation, let $\Pi_2$ be the distribution of $\Pi_2$'s output. Similarly, denote by 
$\Pi_1$ the distribution of the output of protocol $\Pi_1$. Note that if $\cE$ does not occur, then the outcome of the execution of  $\Pi_2$ 
 is identical to the outcome of  $\Pi_1$. 
 %Denote by $\eta$ be the distribution of $\Pi_2 | \cE$. 
 Since $\P[\cE] \leq 2^{-60 I}$, we have
$$ |\Pi_2 - \Pi_1| = \Pr[\cE] \cdot |[\Pi_2|\cE]-[\Pi_1|\cE]| \le 2\cdot 2^{-60I} \ll 2^{-59 I}$$ 

\noindent which finishes the proof of the lemma.

\end{proof}

\begin{remark}
The communication cost of the sampling protocol $\Pi_2$ can be reduced from $O(I_\mu)$ to $O(1)$ (more precisely, to only two bits) in the following way:
Instead of having Alice compute the hash values privately and send them to Bob in step 2 and 3 of the protocol, the players can use their
shared randomness to sample $d=O(I_\mu)$ random hash values $h_1(b_1),\ldots,h_d(b_d)$ (where the $b_i$'s are random independent strings in $\cU$), and Alice will   only send Bob a single bit indicating whether those hash values match the hashing of her string $a$ (i.e, $h_i(b_i) = h_i(a)$ for all $i\in [d]$).
In step 4 Bob will act as before, albeit comparing the hashes of his candidate $b$ to the random hashes $h_i(b_i)$, and output success ("1") if the hashes match.
Note that this modification incurs an additional loss of $2^{-d}=2^{-O(I_\mu)}$ in the success probability of the protocol (as this is the probability that $h_i(b_i) = h_i(a)$ for all $i\in [d]$),
but since the success probability we are shooting for is already of the order $2^{-O(I_\mu)}$, we can afford this loss.
This modification was observed in \cite{kerenidis2012lower}.
\end{remark}

With Lemma \ref{lem_sample} in hand, we are now ready to prove Theorem \ref{thm_pi'}.

%Theorem \ref{thm_pi'} will now follow as a direct application of Lemma \ref{lem_sample}. The idea is that Alice and Bob will use
%the (weak) correlated sampling procedure guaranteed by Lemma \ref{lem_sample}, with $\mu$ taken to be the distribution of the 
%information-optimal protocol $\pi(X,Y)$, whose information cost is $I_{\mu}$, and the distributions $\nu_A$ and $\nu_B$ 
%taken to be its marginals $\pi_x$ and $\pi_y$ respectively. The premises of the lemma will guarantee that with probability $2^{-O(I_\mu)}$ the
%parties sample the correct transcript, using $O(I_\mu)$ bits of communication, which in turn yields the (small) advantage we are 
%looking for in computing the value of $f$.  \\

\noindent \bf Proof of Theorem \ref{thm_pi'}. \rm Let $\pi$ be a protocol that realizes the value $I_\mu:=\ICmu{f}{\mu}{1/10}$.
In other words, $\pi$ has an error rate of at most $1/10$ and information cost of at most $I_\mu$ with respect to $\mu$. 
Denote by $\pi_{xy}$ the random variable that represents that transcript $\pi$ given the inputs $(x,y)$, and by 
$\pi_x$ (resp. $\pi_y$) the protocol conditioned on only the input $x$ (resp. $y$). We denote by $\pi_{XY}$ the 
transcripts where $(X,Y)$ are also a pair of random variables. By Claim \ref{lem_kl}, we know that

\begin{eqnarray}\label{div}
I_\mu = I(X; \pi_{XY}|Y)+I(Y; \pi_{XY}|X) = \E_{(x,y)\sim \mu} [\Div{\pi_{xy}}{\pi_x}+\Div{\pi_{xy}}{\pi_y}].
\end{eqnarray}

Let us now run the sampling algorithm $\Pi_1$ from Lemma~\ref{lem_sample}, with the distribution $\mu$
taken to be $\pi_{xy}$, the distributions $\nu_A$ and $\nu_B$ taken to be $\pi_x$ and $\pi_y$ respectively, 
and $I$ taken to be $20 I_\mu$.

At each node $v$ of the protocol tree that is owned by player $X$ let $p_0(v)$ and $p_1(v)=1-p_0(v)$ denote
the probabilities that the next bit sent by $X$ is $0$ and $1$, respectively. For nodes owned by player $Y$, let $q_0(v)$ 
and $q_1(v)=1-q_0(v)$ denote the probabilities that the next bit sent by $Y$ is $0$ and $1$, respectively, {\em as 
estimated by player $X$ given the input $x$}. For each leaf $\ell$ let $p_X(\ell)$ be the product
of all the values of $p_b(v)$ from the nodes that are owned by $X$ along the path from the root to $\ell$; let $q_X(\ell)$ 
be the product of all the values of $q_b(v)$ from the nodes that are owned by $Y$ along the path from the root to $\ell$. 
The values $p_Y(\ell)$ and $q_Y(\ell)$ are 
defined similarly. For each $\ell$ we have $\P[\pi_{xy}=\ell] = p_X(\ell)\cdot p_Y(\ell)$, $\P[\pi_{x} = \ell] = 
p_X(\ell)\cdot q_X(\ell)$, and $\P[\pi_{y} = \ell] = p_Y(\ell)\cdot q_Y(\ell)$.
Thus we can apply Lemma~\ref{lem_sample} so as to obtain the following protocol $\pi'$ for computing $f$:

\begin{itemize}
\item If $\Pi_1$ fails, we return a random unbiased coin flip.
\item If $\Pi_1$ succeeds, we return the final bit of the transcript sample $T$. Denote this bit by $T_{out}$.
\end{itemize}

To prove the correctness of the protocol, let $\cZ$ denote the event that both $\Div{\pi_{xy}}{\pi_x} \le 20 I_\mu$ and 
$\Div{\pi_{xy}}{\pi_y} \le 20 I_\mu$. By (\ref{div}) and Markov inequality, $\Pr[\cZ] \geq 19/20$ (where the probability 
is taken with respect to $\mu$). Denote by $\delta$ the probability that $\Pi_1$ succeeds. By the assertions of Lemma~\ref{lem_sample}, 
$\delta \geq 0.9\cdot 2^{-50(I+1)}$.  Furthermore, if $\Pi_1$ succeeds, then we have $|T-\pi_{xy}|\leq2/9$, 
which in particular implies that $\P[T_{out} = \pi_{out}] \geq 7/9$. Finally, $\P[ \pi_{out} = f(x,y)] \geq 9/10$, since $\pi$ 
has error at most $1/10$ with respect to $\mu$. Now, let $\cW$ denote the indicator variable whose value is $1$ iff 
$\pi'(x,y) = f(x,y)$. Putting together the above,

\begin{eqnarray}
\E[\cW \; | \; \cZ] = (1-\delta)\cdot \frac{1}{2} + \delta\cdot \left(\frac{7}{9}-\frac{1}{10}\right) > \frac{1}{2} + 
\delta\cdot\frac{1}{6} > \frac{1}{2} + \frac{1}{8}\cdot 2^{-50(I+1)}.
\end{eqnarray}

\noindent On the other hand, note that by lemma \ref{lem_sample} the probability that $\Pi_1$ succeeds is at most 
$2^{-50(I+1)}$ (no matter how large $\Div{\pi_{xy}}{\pi_x}$ and $\Div{\pi_{xy}}{\pi_y}$ are!), and so
$
\E[\cW \; | \; \neg\cZ] \geq (1-2^{-50(I+1)})/2.
$

\noindent Hence we conclude that
\smallskip
\begin{eqnarray}
\E[\cW] &=& \E[\cW \; | \; \cZ]\cdot\P[\cZ] + \E[\cW \; | \; \neg\cZ]\cdot\P[\neg\cZ] \geq \left(\frac{1}{2} + 
\frac{1}{8}\cdot 2^{-50(I+1)}\right)\cdot \frac {19}{20} + \nonumber  \\  &+&
\left(1-2^{-50(I+1)}\right)\cdot\frac{1}{2}\cdot\frac{1}{20}
\geq \frac{1}{2} + \frac{1}{12}\cdot 2^{-50(I+1)} > \frac{1}{2} + \frac{1}{12}\cdot 2^{-1000(I_\mu+1)}. \nonumber 
\end{eqnarray}

Finally, Lemma~\ref{lem_sample} asserts that $|\Pi_1 - \Pi_2| < 2^{-59 I}$. Thus if we replace $\Pi_1$ by $\Pi_2$ 
in the execution of protocol $\pi'$,  
the success probability decreases by at most $2^{-59 I}\ll \frac{1}{12}\cdot 2^{-50(I+1)}$. 
Furthermore, the amount of communication used by $\pi'$ is now $$ 211 I = 4220 I_\mu = O(I_\mu).$$

\noindent Hence we conclude that with this modification, $\pi'$ has the following properties:
\begin{itemize}
\item $\CProt{\pi'} = 4220\cdot I_\mu$;
\item $\P_{(x,y)\sim\mu}[\pi'(x,y) = f(x,y)] \geq 1/2 + 2^{-1000(I_\mu+1)-4}$;
\end{itemize}

\noindent which completes the proof.

\begin{remark}
Using similar techniques, it was recently shown in \cite{BravermanInteractive11} that any function $f$ 
whose information complexity is $I$ has communication cost at most $2^{O(I)}$ \footnote{More precisely, it shows 
that for any distribution $\mu$, $D_{\ve + \delta}^\mu(f)  = 2^{O(1+\ICmu{f}{\mu}{\ve}/\delta^2)}$.},
thus implying that $IC(f)\geq \Om(\log(CC(f)))$. We note that this result can be easily derived (up to constant factors) 
from Theorem \ref{thm_pi'}. Indeed, applying the ``compressed" protocol $2^{O(I)}\log(1/\eps)$ independent 
times and taking a majority vote guarantees an error of at most $\eps$ (by a standard Chernoff bound\footnote{See N.Alon 
and J. Spencer, "The Probabilistic Method" (Third Edition) ,Corollary A.1.14, p.312.}), with communication 
$O(I)\cdot2^{O(I)} = 2^{O(I)}$. Thus, our result is strictly stronger than the former one.
\end{remark}

\end{section}

%%%%%%%%%%%%%%%%%%%%%%%%%%%%%%%%%%%%%%%%%%%%%ignore

\subsection*{Acknowledgments}
We thank Ankit Garg and several anonymous reviewers from RANDOM 12'  for their useful comments and helpful discussions.

\bibliographystyle{alpha}
%Gather{refs.bib}
%input "refs.bib"
\bibliography{refs}

\appendix

\section{Communication Complexity}

Let $\mathcal{X}, \mathcal{Y}$ denote the set of possible inputs to the two players, who we name A and B.  We
 view a \emph{private coins protocol} for computing a function $f: \mathcal{X} \times \mathcal{Y}
\rightarrow \Z_K$ as a rooted tree with the following structure:
\begin{itemize}

\item Each non-leaf node is \emph{owned} by A or by B.
\item Each non-leaf node owned by a particular player has a set of children that are owned by the other player. 
Each of these children is labeled by a binary string, in such a way that this coding is prefix free: no child has a 
label that is a prefix of another child. 

\item Every node is associated with a function mapping $\mathcal{X}$ to distributions on children of the node 
and a function mapping $\mathcal{Y}$ to distributions on children of the node.

\item The leaves of the protocol are labeled by output values.

\end{itemize}

%\noindent On input $x,y$, the protocol $\pi$ is executed as in \figureref{figure:pi1}.

\ignore{
\begin{figure}[h!tb]
\begin{tabular}{|l|}
\hline
\begin{minipage}{\algwidth}
\vspace{1ex}
\begin{center}
\textbf{Generic Communication Protocol}
\end{center}
\vspace{0.5ex}
\end{minipage}\\
\hline
\begin{minipage}{\algwidth}
\vspace{1ex}
\begin{enumerate}
    \item Set $v$ to be the root of the protocol tree.
    \item If $v$ is a leaf, the protocol ends and outputs the value in the label of $v$. Otherwise, the player owning 
    $v$ samples a child of $v$ according to the distribution associated with her input for $v$ and sends the label 
    to indicate which child was sampled.
    \item Set $v$ to be the newly sampled node and return to the previous step.
\end{enumerate}
\vspace{0.3ex}
\end{minipage}\\
\hline
\end{tabular}
\caption{A communication protocol.}\label{figure:pi1}
\end{figure}
}

A public coin protocol is a distribution on private coins protocols, run by first using shared randomness to sample an
index $r$ and then running the corresponding private coin protocol $\pi_r$. Every private coin protocol is thus a
public coin protocol. The protocol is called deterministic if all distributions labeling the nodes have support size $1$.

\begin{definition}
The \emph{communication cost} (or communication complexity) of a public coin protocol $\pi$, denoted $\CProt{\pi}$, 
is the maximum number of bits that can be transmitted in any run of the protocol.
\end{definition}

\begin{definition}
The \emph{number of rounds} of a public coin protocol is the maximum depth of the protocol tree $\pi_r$ over all 
choices of the public randomness. 
\end{definition}

\section{Proof of Claim 3 (from \cite{BravermanInteractive11})}

\begin{proof}
Recall that $\Div\mu\nu = \sum_{x\in\cU} \mu(x) \log \frac{\mu(x)}{\nu(x)}$. Denote by $\cN=\{x:~\mu(x)<\nu(x)\}$ 
-- the terms that contribute a negative amount to $\Div\mu\nu$.
First we observe that for all $0<x<1$, $x\log x>-1$, and thus
$$
\sum_{x\in\cN} \mu(x) \log \frac{\mu(x)}{\nu(x)} = 
\sum_{x\in\cN} \nu(x)\cdot \frac{\mu(x)}{\nu(x)} \log \frac{\mu(x)}{\nu(x)} 
\ge \sum_{x\in\cN} \nu(x)\cdot (-1) > -1. 
$$
Denote by $\cL = \left\{x:~2^{(I+1)/\ve}\cdot \nu(x) <\mu(x)\right\}$; we need to show that $\mu(\cL)<\ve$. 
For each $x\in\cL$ we have $\log \frac{\mu(x)}{\nu(x)}>(I+1)/\ve$. 
Thus
$$
I \ge \Div{\mu}{\nu} \ge \sum_{x\in\cL} \mu(x) \log \frac{\mu(x)}{\nu(x)} + \sum_{x\in\cN} \mu(x) \log \frac{\mu(x)}{\nu(x)} >
\mu(\cL)\cdot (I+1)/\ve -1,
$$
implying $\mu(\cL)<\ve$. 
\end{proof}

\section{Proof of Claim 4} %\ref{stat_dist}}
\begin{proof}

For any $x_i\in\cU$, 

\begin{eqnarray}
\mu_1(x_i)= \Pr(a = x_i \; | \; \suc) \leq \frac{\mu(x_i)2^{-50{(I+1)}}}{\Pr[\suc]} \leq \frac{\mu(x_i)}{0.9} = (1 + 1/9)\mu(x_i)
\end{eqnarray}

\noindent where the last inequality follows from Proposition \ref{s_bound}. Hence,
$|\mu_1 - \mu| = $

\begin{eqnarray}
2\bigg{(}\sum_{x_i : \mu_1(x_i) \geq \mu(x_i)} \mu_1(x_i) - \mu(x_i)\bigg{)} \leq 2\bigg{(} \sum_{x_i : \mu_1(x_i) 
\geq \mu(x_i)}(1 + 1/9)\mu(x_i) - \mu(x_i) \bigg{)} \leq \frac{2}{9} \nonumber
\end{eqnarray}

\noindent This proves claim (3) of the lemma.
\end{proof}

\section{Proof of Lemma \ref{lem:GT}: The discrepancy of the Greater-Than function}

We consider the Greater-Than function on $n$-bit strings.
We start by defining the ``hard" distribution $\mu$. A pair $(x,y)$ is sampled as follows:
\begin{enumerate}
\item 
Sample an index $k\in\{1,\ldots,n\}$ uniformly at random.
\item 
Sample $z_1,\ldots,z_{k-1}$, $w$, $x_{k+1},\ldots,x_n$, $y_{k+1},\ldots,y_n$ --- uniformly random bits. 
\item 
Let $x=z_1,\ldots,z_{k-1},w,x_{k+1},\ldots,x_n$, $y=z_1,\ldots,z_{k-1},\overline{w},y_{k+1},\ldots,y_n$.
\end{enumerate}
Denote this distribution by $\mu_n$. Let $GT_n(x,y)=1$ iff $x>y$. 
We will prove the following Lemma:

\begin{lemma}
\label{lem:GTmain}
The discrepancy of $GT_n$ with respect to $\mu_n$ satisfies $$Disc_{\mu_n}(GT_n)<\frac{20}{\sqrt{n}}.$$ 
\end{lemma}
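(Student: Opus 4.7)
The plan is to bound, uniformly over all rectangles $R = A \times B$, the signed mass
\begin{equation*}
S(R) := \E_{(x,y)\sim \mu_n}\bigl[(-1)^{GT_n(x,y)} \mathbf{1}_R(x,y)\bigr]
\end{equation*}
by $O(1/\sqrt n)$, which gives the lemma since $Disc_{\mu_n}(GT_n) = \max_R |S(R)|$. The approach exploits the tree structure of $\mu_n$ together with a martingale variance identity applied to the indicator functions of $A$ and $B$.

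First I would introduce the natural marginal quantities. For a binary prefix $u$ of length $\le n$, set $a(u) := \Pr_X[uX \in A]$ and $b(u) := \Pr_Y[uY \in B]$, where $X,Y \in \{0,1\}^{n-|u|}$ are uniform. Note the averaging relation $a(z0) + a(z1) = 2a(z)$ (and the analog for $b$), and that $k \mapsto a(U_k)$ is a martingale along a uniform random root-to-leaf path $U_k$ in the binary tree. Under $\mu_n$, the pair $(x,y)$ first disagrees at a uniform position $k \in [n]$, and $GT_n(x,y)$ depends only on the disagreement bit $w = x_k$ (it equals $w$ under the MSB-first convention). Averaging over $x_{k+1},\ldots,x_n$ and $y_{k+1},\ldots,y_n$, then over $w \in \{0,1\}$, yields
\begin{equation*}
S(R) = \frac{1}{2n}\sum_{k=1}^n \E_{z \sim \{0,1\}^{k-1}}\bigl[a(z0)b(z1) - a(z1)b(z0)\bigr].
\end{equation*}

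Next I would rewrite each summand using a discrete-derivative identity. Setting $\delta_a(z) := a(z0)-a(z1)$ and $\delta_b(z) := b(z0)-b(z1)$, a direct expansion using the averaging relations gives
\begin{equation*}
a(z0)b(z1) - a(z1)b(z0) = b(z)\,\delta_a(z) - a(z)\,\delta_b(z),
\end{equation*}
so by the triangle inequality it suffices to bound $\frac{1}{2n}\sum_k |\E_z[b(z)\delta_a(z)]|$ and its $a \leftrightarrow b$ counterpart.

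The key technical step is the variance bound
\begin{equation*}
\sum_{k=1}^n \E_{z \sim \{0,1\}^{k-1}}[\delta_a(z)^2] \;=\; 4\mu(A)\bigl(1-\mu(A)\bigr) \;\le\; 1,
\end{equation*}
which is just the orthogonality of martingale increments: $a(U_k)-a(U_{k-1}) = \pm\delta_a(U_{k-1})/2$, so the left side equals $4\bigl(\E[a(U_n)^2] - a(\emptyset)^2\bigr) = 4(\mu(A)-\mu(A)^2)$, and analogously for $b$. Combined with the trivial bound $\sum_k \E_z[b(z)^2] \le \sum_k \E_z[b(z)] = n\mu(B) \le n$ (since $b(z) \in [0,1]$), two applications of Cauchy--Schwarz --- once to get $|\E_z[b\,\delta_a]| \le \sqrt{\E_z[b^2]\,\E_z[\delta_a^2]}$, and once across the $k$-sum --- yield $\sum_k |\E_z[b(z)\delta_a(z)]| \le \sqrt n$, with the symmetric term also bounded by $\sqrt n$. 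Hence $|S(R)| \le 1/\sqrt n$, well inside the claimed $20/\sqrt n$. The only real obstacle in executing the plan is spotting the martingale identity that makes the two Cauchy--Schwarz applications balance to $1/\sqrt n$; once that is in place, everything else is mechanical.
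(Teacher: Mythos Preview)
Your proof is correct and takes a genuinely different route from the paper's. The paper proves the slightly stronger auxiliary statement $Disc_{\mu_n}(GT_n,S\times T) < 20\sqrt{st}/\sqrt{n}$ by induction on $n$: it splits $S$ and $T$ according to the first bit, applies the inductive hypothesis to the diagonal subrectangles $S_0\times T_0$ and $S_1\times T_1$ (where $\mu_n$ restricts to a scaled copy of $\mu_{n-1}$), handles the off-diagonal rectangles explicitly, and then closes the recursion with a somewhat delicate case analysis involving $\sqrt{(n-1)/n}\le 1-1/(2n)$, $\sqrt{pq}+\sqrt{(1-p)(1-q)}\le 1-(q-p)^2/4$, and an AM--GM step. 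Your argument instead unrolls the whole sum at once, rewrites each level-$k$ contribution as $b(z)\delta_a(z)-a(z)\delta_b(z)$, and finishes with two Cauchy--Schwarz applications powered by the martingale variance identity $\sum_k \E_z[\delta_a(z)^2]=4\mu(A)(1-\mu(A))\le 1$. This is cleaner, avoids the induction entirely, and yields the sharper constant $1$ in place of $20$; moreover, if you retain the factors $\sqrt{n\mu(B)}$ and $\sqrt{4\mu(A)(1-\mu(A))}$ rather than bounding them by $\sqrt{n}$ and $1$, you recover the paper's auxiliary lemma with constant $2$ instead of $20$. The paper's inductive approach, on the other hand, is perhaps more self-contained for a reader unfamiliar with the martingale-increment orthogonality identity, and makes the recursive structure of $\mu_n$ explicit.
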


In fact, to facilitate an inductive proof, we will show a slightly stronger statement:

\begin{lemma}
\label{lem:GTaux}
Let $R=S\times T$ be a rectangle in $\{0,1\}^n\times\{0,1\}^n$. Let $s:=|S|/2^n$ and $t:=|T|/2^n$ be the uniform 
size of $S$ and $T$ respectively. Then 
$$
Disc_{\mu_n}(GT_n,R)<\frac{20 \sqrt{s t}}{\sqrt{n}}.
$$
\end{lemma}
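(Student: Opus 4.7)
My plan is to prove Lemma \ref{lem:GTaux} by induction on $n$; Lemma \ref{lem:GT} is the special case $S=T=\{0,1\}^n$. Define
$$\Delta_n(S,T) \eqdef \sum_{(x,y)\in S\times T} \mu_n(x,y)\,(-1)^{GT_n(x,y)},$$
so that $Disc_{\mu_n}(GT_n, S\times T) = |\Delta_n(S,T)|$. Split on the first coordinate: for $\beta\in\{0,1\}$, let $S_\beta \eqdef \{x\in S : x_1=\beta\}$ and let $S_\beta' \subseteq \{0,1\}^{n-1}$ be the corresponding set of suffixes (so $S_\beta = \{\beta\} \times S_\beta'$); define $T_\beta, T_\beta'$ analogously, and set $s_\beta \eqdef |S_\beta|/2^{n-1}$, $t_\beta \eqdef |T_\beta|/2^{n-1}$, so that $s=(s_0+s_1)/2$ and $t=(t_0+t_1)/2$ with each $s_\beta, t_\beta \in [0,1]$.

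The first step is to derive a recurrence for $\Delta_n$ by decomposing the sampling of $\mu_n$ according to whether the diverging index $k$ equals $1$ (probability $1/n$; then $(-1)^{GT_n(x,y)}$ is determined by $x_1$, and a direct computation gives contribution $(s_0 t_1 - s_1 t_0)/(2n)$) or $k>1$ (probability $(n-1)/n$; then the common bit $z_1=x_1=y_1$ is uniform, and conditioned on $z_1=\beta$, the suffix pair $(x',y')$ is distributed as $\mu_{n-1}$ with $GT_n(\beta x', \beta y') = GT_{n-1}(x',y')$). This yields
$$\Delta_n(S,T) = \frac{s_0 t_1 - s_1 t_0}{2n} + \frac{n-1}{2n}\bigl[\Delta_{n-1}(S_0',T_0') + \Delta_{n-1}(S_1',T_1')\bigr].$$
Applying the triangle inequality together with the inductive bound $|\Delta_{n-1}(S_\beta',T_\beta')| \leq 20\sqrt{s_\beta t_\beta}/\sqrt{n-1}$ reduces the inductive step to the purely algebraic inequality
$$|ad-bc| + 20\sqrt{n-1}\,(\sqrt{ac}+\sqrt{bd}) \leq 20\sqrt{n}\,\sqrt{(a+b)(c+d)}$$
for all $a,b,c,d\in[0,1]$, with $a=s_0, b=s_1, c=t_0, d=t_1$. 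Setting $M\eqdef\sqrt{(a+b)(c+d)}$ and $P\eqdef\sqrt{ac}+\sqrt{bd}$, the identity $M^2-P^2=(\sqrt{ad}-\sqrt{bc})^2$ combined with $\sqrt{ad}+\sqrt{bc}\leq\sqrt{2(ad+bc)}\leq\sqrt{2}\,M$ gives $|ad-bc|\leq\sqrt{2}\,M\sqrt{M^2-P^2}$. A single Cauchy-Schwarz on the pairs $(\sqrt{2}\,M, 20\sqrt{n-1})$ and $(\sqrt{M^2-P^2}, P)$ then bounds the LHS by $M\sqrt{2M^2+400(n-1)}$, which is $\leq 20\sqrt{n}\,M$ whenever $M^2\leq 200$; this is comfortably satisfied because $a,b,c,d\leq 1$ forces $M\leq 2$.

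The step I expect to be the main obstacle is precisely the algebraic one. A direct triangle-inequality plus induction approach is too lossy: the prefactor $(n-1)/n \cdot 20/\sqrt{n-1} = 20\sqrt{n-1}/n$ on the inductive term already almost exhausts the target $20\sqrt{st/n}$, leaving essentially no room for the $|s_0 t_1 - s_1 t_0|/(2n)$ contribution from $k=1$. The identity $M^2-P^2=(\sqrt{ad}-\sqrt{bc})^2$ is the linchpin: it repackages $|ad-bc|$ as the orthogonal complement of $P$ inside the envelope $M$, so the $k=1$ and $k>1$ contributions can be combined via a single Cauchy-Schwarz application. The generous constant $20$ (as opposed to a tight one) provides exactly the slack ($M^2\leq 4$ versus the requirement $M^2\leq 200$) needed to absorb the $k=1$ term; the base case $n=1$ where $\Delta_1(S,T)=(s_0t_1-s_1t_0)/2$ with $s_\beta,t_\beta\in\{0,1\}$ is then an easy check against $20\sqrt{st}$.
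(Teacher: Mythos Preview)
Your proof is correct and follows the same inductive skeleton as the paper: induction on $n$, split $S$ and $T$ according to the first bit, observe that the $k=1$ event contributes the cross term while the $k>1$ event reduces to two copies of $\mu_{n-1}$, and then close an algebraic inequality. Your recurrence
\[
\Delta_n(S,T)=\frac{s_0 t_1-s_1 t_0}{2n}+\frac{n-1}{2n}\bigl[\Delta_{n-1}(S_0',T_0')+\Delta_{n-1}(S_1',T_1')\bigr]
\]
is exactly the paper's identity rewritten in the variables $s_\beta,t_\beta$ instead of the fractions $p=|S_0|/|S|$, $q=|T_0|/|T|$.

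Where you genuinely diverge is in the algebraic endgame. The paper finishes by a case split on the sign of $q-p$; in the nontrivial case $q\ge p$ it invokes the special inequalities $\sqrt{(n-1)/n}<1-\tfrac{1}{2n}$ and $\sqrt{pq}+\sqrt{(1-p)(1-q)}\le 1-(q-p)^2/4$, and then an AM--GM balancing to absorb the $\tfrac{2}{n}(q-p)st$ term. Your route is cleaner: the identity $M^2-P^2=(\sqrt{ad}-\sqrt{bc})^2$ recognises $P$ and $|\sqrt{ad}-\sqrt{bc}|$ as orthogonal components of $M$, whence a single Cauchy--Schwarz on the vectors $(\sqrt{2}\,M,\,20\sqrt{n-1})$ and $(\sqrt{M^2-P^2},\,P)$ closes the induction in one line, with the constant $20$ appearing exactly as the condition $M^2\le 200$ (comfortably met by $M^2\le 4$). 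This avoids the case analysis and the somewhat ad hoc bounds of the paper, and makes transparent why a constant of order $1$ (rather than, say, $1/\sqrt{n}$) suffices: the ``new'' $k=1$ contribution lives in the orthogonal complement of the ``old'' $k>1$ contribution inside the envelope $20\sqrt{st}$.
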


Note that Lemma~\ref{lem:GTaux} immediately implies Lemma~\ref{lem:GTmain}.

\begin{proof} 
We prove Lemma~\ref{lem:GTaux} by induction on $n$. The statement is trivially true for $n=1$. 
Assume the statement is true for $n-1$. Our goal is to prove it for $n$. Let $R=S\times T$ be 
any rectangle in $\{0,1\}^n\times\{0,1\}^n$.
By a slight abuse of notation we write:
$$
Disc_{\mu_n}(GT_n,R) =  \Pr_{\mu_n}[f(x,y) = 1 \wedge (x,y)\in R] - \Pr_{\mu_n}[f(x,y) = 0 \wedge (x,y)\in R],
$$
and prove an upper bound on this quantity (without $|\,\cdot\,|$). The matching upper bound on $-Disc_{\mu_n}(GT_n,R)$ 
follows by an identical argument. 

 Let $s:=|S|/2^n$ and $t:=|T|/2^n$. Denote by $S_0$ 
the set of strings in $S$ that begin with a $0$, and $S_1:=S\setminus S_0$. Similarly, define $T_0$ and $T_1$. 
Further, let $p:=|S_0|/|S|$ and $q:=|T_0|/|T|$. 

Note that restricted to $S_0\times T_0$, $\mu_{n}$ is the same distribution as $\mu_{n-1}$, scaled by a factor 
of $\frac{n-1}{2n}$. Moreover, $s_0:=|S_0|/2^{n-1}=p s  2^{n}/2^{n-1} = 2ps$. Similarly, $s_1:=|S_1|/2^{n-1}=2(1-p)s$, 
$t_0:=|T_0|/2^{n-1} = 2qt$, $t_1:=|T_1|/2^{n-1}=2(1-q)t$. Putting these pieces together, and applying the inductive 
hypothesis we get:
\begin{multline} \label{eq:gt1}
Disc_{\mu_n}(GT_n,S\times T) = Disc_{\mu_n}(GT_n,S_0\times T_0) +Disc_{\mu_n}(GT_n,S_1\times T_1) + \\  Disc_{\mu_n}(GT_n,S_1\times T_0)
+Disc_{\mu_n}(GT_n,S_0\times T_1)    = \\ 
\frac{n-1}{2 n}\cdot Disc_{\mu_{n-1}}(GT_{n-1},S_0\times T_0) + \frac{n-1}{2 n}\cdot Disc_{\mu_{n-1}}(GT_{n-1},S_1\times T_1) + \\
\frac{2}{n}(1-p)s q t - \frac{2}{n}p s (1-q) t  < \\
\frac{n-1}{2n}\cdot \frac{20 \sqrt{s_0 t_0}}{\sqrt{n-1}} + \frac{n-1}{2n}\cdot \frac{20 \sqrt{s_1 t_1}}{\sqrt{n-1}} + \frac{2}{n} (q-p) st = \\
\frac{1}{\sqrt{n}} \left( \sqrt{\frac{n-1}{n}} \cdot \left( {20 \sqrt{pq}\cdot\sqrt{st}}+{20 \sqrt{(1-p)(1-q)}\cdot\sqrt{st}}\right) +\frac{2}{\sqrt{n}} (q-p) st \right).
\end{multline}
If $q-p<0$, we continue \eqref{eq:gt1} as follows:
\begin{multline*}
RHS \le \frac{1}{\sqrt{n}} \left( \sqrt{\frac{n-1}{n}} \cdot \left( {20 \sqrt{pq}\cdot\sqrt{st}}+{20 \sqrt{(1-p)(1-q)}\cdot\sqrt{st}}\right)\right)\le \\
\frac{20\sqrt{st}}{\sqrt{n}} \cdot \left(\sqrt{pq}+\sqrt{(1-p)(1-q)}\right) \le \frac{20\sqrt{st}}{\sqrt{n}},
\end{multline*}
where the last inequality follows from simple calculations. 

On the other hand, in the more difficult case when  $q-p\ge 0$, we use the fact the $st\le 1$ to continue \eqref{eq:gt1} as follows:
\begin{multline}
\label{eq:gt2}
RHS \le \\ \frac{1}{\sqrt{n}} \left( \sqrt{\frac{n-1}{n}} \cdot \left( {20 \sqrt{pq}\cdot\sqrt{st}}+{20 \sqrt{(1-p)(1-q)}\cdot\sqrt{st}}\right) +\frac{2}{\sqrt{n}} (q-p) \sqrt{st} \right) = \\
\frac{20\sqrt{st}}{\sqrt{n}} \left( \sqrt{\frac{n-1}{n}} \cdot \left( { \sqrt{pq}}+{ \sqrt{(1-p)(1-q)}}\right) +\frac{1}{10 \sqrt{n}} (q-p) \right) 
\end{multline}
Next, we use the readily verifiable facts that $\sqrt{\frac{n-1}{n}} < 1-\frac{1}{2n}$ and that ${ \sqrt{pq}}+{ \sqrt{(1-p)(1-q)}} \le 1-(q-p)^2/4$,
to continue \eqref{eq:gt2} as follows:
\begin{multline}
\label{eq:gt3}
RHS \le \frac{20\sqrt{st}}{\sqrt{n}} \left( \left(1-\frac{1}{2n}\right) \cdot \left( 1-(q-p)^2/4 \right) +\frac{1}{10 \sqrt{n}} (q-p) \right) \le\\
 \frac{20\sqrt{st}}{\sqrt{n}} \left( \left(1-\frac{1}{4n}-(q-p)^2/8 \right) +\frac{1}{10 \sqrt{n}} (q-p) \right) =
\\
 \frac{20\sqrt{st}}{\sqrt{n}} \left( 1-\left(\frac{1/(2n)+(q-p)^2/4}{2} \right) +\frac{1}{10 \sqrt{n}} (q-p) \right)  \le \\
  \frac{20\sqrt{st}}{\sqrt{n}} \left( 1-\sqrt{\frac{1}{2n}\cdot\frac{(q-p)^2}{4}} +\frac{1}{10 \sqrt{n}} (q-p) \right)=\\
    \frac{20\sqrt{st}}{\sqrt{n}} \left( 1-\frac{1}{\sqrt{8} \sqrt{n}} (q-p) +\frac{1}{10 \sqrt{n}} (q-p) \right) \le  \frac{20\sqrt{st}}{\sqrt{n}},
\end{multline}
where the third-to-last inequality follows from the fact that for all $0\le a,b\le 1$, $(1-a)(1-b)\le 1-a/2-b/2$, and the second-to-last one in an application 
of the AM-GM inequality. 
\end{proof}

\newpage
\section{Sampling protocol from Lemma \ref{lem_sample}}

\begin{figure}[h!tb] 
\begin{tabular}{|l|}
\hline
\begin{minipage}{\algwidth}
\vspace{1ex}
\begin{center}
\textbf{Information-cost sampling protocol $\Pi_1$} 
\end{center}
\vspace{0.5ex}
\end{minipage}\\
\hline
\begin{minipage}{\algwidth}
\vspace{1ex}
\begin{enumerate}
    \item Alice  computes the set $\cA$. Bob  computes the set $\cB$. %If $|\cB| >T$ the protocol fails. 
    \item If $\cA=\emptyset$, the protocol fails, otherwise Alice finds the first element $a \in \cA$, and sends $a$ to Bob.
    \item \label{find} Bob checks if $a \in \cB$. If not, the protocol fails.
    \item Alice and Bob output $a$ (``success").  
\end{enumerate}
\vspace{0.3ex}
\end{minipage}\\
\hline
\end{tabular}
\caption{The sampling protocol $\Pi_1$ from Lemma~\ref{lem_sample} }\label{figure:pi1}
\end{figure}

\end{document}